\newenvironment{nonumtheorem*}[1]
    {%
    \begin{nonumtheorem}}
    {\end{nonumtheorem}}
\DeclareMathOperator{\EC}{EC}
\DeclareMathOperator{\D}{D}
\DeclareMathOperator{\poly}{poly}
\DeclareMathOperator{\psens}{psens}
\DeclareMathOperator{\bs}{bs}
\DeclareMathOperator{\C}{C}
\newcommand{\OR}{\mathtt{OR}}
\newcommand{\ADDR}{\mathtt{ADDR}}
\newcommand{\ADDRESS}{\mathtt{ADDRESS}}
\newcommand{\EADDR}{\mathtt{EADDR}}
\newcommand{\AND}{\mathtt{AND}}
\newcommand\ccgate{}
\def\ccgate[#1](#2,#3)(#4){
    \node[scale=1.5] at (#2,#3) {#4};
    \node[circle, draw=black, thick, minimum size=0.54cm] (#1) at (#2,#3) {};
}
\newcommand\sqrgate{}
\def\sqrgate[#1](#2,#3)(#4){
    \node[scale=1.2] (#1) at (#2,#3) {#4};
    \node[rectangle, draw=black, thick, minimum height=0.5cm, minimum width=0.5cm] 
        (#1) at (#2,#3) {};
}
\newcommand\andgate{}
\def\andgate[#1](#2,#3){\ccgate[#1](#2,#3)($\wedge$);}
\newcommand\orgate{}
\def\orgate[#1](#2,#3){\ccgate[#1](#2,#3)($\vee$);}
\newcommand\neggate{}
\def\neggate[#1](#2,#3){\ccgate[#1](#2,#3)($\neg$);}
\begin{document}
\title{On the Relationship between Energy Complexity and other Boolean Function Measures%\thanks{Supported by organization x.}
}
\titlerunning{Energy Complexity and other Boolean Function Measures}
% If the paper title is too long for the running head, you can set
% an abbreviated paper title here
%
\author{Xiaoming Sun\inst{1,2} \and
Yuan Sun\inst{1,2} \and
Kewen Wu\inst{3} \and
Zhiyu Xia\inst{1,2}}
\authorrunning{Xiaoming Sun et al.}
% First names are abbreviated in the running head.
% If there are more than two authors, 'et al.' is used.
%
\institute{CAS Key Lab of Network Data Science and Technology, Institute of Computing Technology, Chinese Academy of Sciences, Beijing, China\\
\and
University of Chinese Academy of Sciences, Beijing, China\\
\email{\{sunxiaoming,sunyuan2016,xiazhiyu\}@ict.ac.cn}\\
\and
School of Electronics Engineering and Computer Science, Peking University, China\\
\email{shlw\_kevin@pku.edu.cn}}
\maketitle              % typeset the header of the contribution
\begin{abstract}
In this work we investigate {\em energy complexity}, a Boolean function measure related to circuit complexity. 
Given a circuit $\mathcal{C}$ over the standard basis $\{\vee_2,\wedge_2,\neg\}$, the energy complexity of $\mathcal{C}$, denoted by $\EC(\mathcal{C})$, is the maximum number of its activated inner gates over all inputs. The energy complexity of a Boolean function $f$, denoted by $\EC(f)$, is the minimum of $\EC(\mathcal{C})$ over all circuits $\mathcal{C}$ computing $f$. 

%This concept has attracted lots of attention in literature.
Recently, K. Dinesh et al. ~\cite{dinesh2018new}
gave $\EC(f)$ an upper bound in terms of the decision tree complexity, $\EC(f)=O(\D(f)^3)$. They also showed that $\EC(f)\leq 3n-1$, where $n$ is the input size. For the lower bound, they show that $\EC(f)\ge\frac{1}{3}\psens(f)$, where $\psens(f)$ is the \textit{positive sensitivity}.  They asked whether $\EC(f)$ can be lower bounded by a polynomial of $\D(f)$.
%Recall that the minimum size of circuit to compute $f$ could be as large as $2^n/n$. 
We improve both the upper and lower bounds in this paper. For upper bounds, We show that $\EC(f)\leq \min\{\frac{1}{2}\D(f)^2+O(\D(f)),n+2\D(f)-2\}$. For the lower bound, we answer Dinesh et al.'s question by proving that $\EC(f)=\Omega(\sqrt{\D(f)})$. For non-degenerated functions, we also give another lower bound $\EC(f)=\Omega(\log{n})$ where $n$ is the input size. These two lower bounds are incomparable to each other. Besides, we examine the energy complexity of $\OR$ functions and $\ADDRESS$ functions, which implies the tightness of our two lower bounds respectively. In addition, the former one answers another open question in~\cite{dinesh2018new}
asking for non-trivial lower bound for energy complexity of $\OR$ functions. 

\keywords{Energy complexity \and Decision tree \and Boolean function \and Circuit complexity.}
\end{abstract}
\section{Introduction}
	
	\subsection{Background and Main Problem}
	
	Given a gate basis $\mathcal B$ and a circuit $\mathcal{C}$ over $\mathcal B$, the {\it energy complexity} of $\mathcal{C}$, defined as $\EC_{\mathcal B}(\mathcal{C})$, is the maximum number of activated gates (except input gates) in $\mathcal{C}$ over all possible inputs. Spontaneously, the {\it energy complexity} of a Boolean function $f:\{0,1\}^n\to\{0,1\}$ over gate basis $\mathcal B$ is defined as $\EC_{\mathcal B}(f):=\min_{\mathcal{C}}\EC_\mathcal{B}(\mathcal{C})$, where $\mathcal{C}$ is a circuit over $\mathcal B$ computing $f$.
	
	When $\mathcal B$ is composed of {\it threshold} gates, this model simulates the neuron activity \cite{uchizawa2006computational,uchizawa2008exponential}, as the transmission of a `spike' in neural network is similar with an activated threshold gate in the circuit. A natural question readily comes up: can Boolean functions be computed with rather few activated gates over threshold gate basis? In order to answer this question, or more precisely, to give lower and upper bounds, plenty of studies were motivated \cite{amano2005complexity,hajnal1993threshold,haastad1991power,razborov1993nomega,uchizawa2006computational}. 
	
	Despite motivated by neurobiology in modern world, tracing back into history, this concept is not brand-new. Let $\EC_{\mathcal B}(n)$ be the maximum energy complexity among all Boolean functions on $n$ variables over basis $\mathcal B$, i.e., the maximum $\EC_{\mathcal B}(f)$ among all possible $f:\{0,1\}^n\to\{0,1\}$. Vaintsvaig \cite{vaincvaig1961power} proved that asymptotically, if $\mathcal B$ is finite, the lower and upper bounds of $\EC_{\mathcal B}(n)$ are $n$ and $2^n/n$ respectively. Then this result was further refined by the outstanding work from Kasim-zade \cite{kasimzade1992measure}, which states that $\EC_{\mathcal B}(n)$ could be $\Theta(2^n/n)$, between $\Omega(2^{n/2})$ and $O(\sqrt n2^{n/2})$, or between $\Omega(n)$ and $O(n^2)$.
	
	When it comes to a specific gate basis, a natural thought is to discuss the energy complexity over the standard Boolean basis $\mathcal{B}=\{\vee_2,\wedge_2,\neg\}$. (From now on we use $\EC(f)$ to represent $\EC_{\mathcal{B}}(f)$ for the standard basis.) Towards this, Kasim-zade \cite{kasimzade1992measure} showed that $\EC(f)=O(n^2)$ for any $n$ variable Boolean function $f$ by constructing an explicit circuit, which was further improved by Lozhkin and Shupletsov \cite{lozhkin2015switching} to $4n$ and then $(3+\epsilon(n))n$ where $\lim_{n\to\infty}\epsilon(n)=0$. Yet these results only connect energy complexity with the number of variables in Boolean function while disregarding other important Boolean function measures.
	
	Recently, Dinesh, Otiv, Sarma \cite{dinesh2018new} discovered a new upper bound which relates energy complexity to decision tree complexity, a well-studied Boolean function complexity measure. In fact, they proved that for any Boolean function $f:\{0,1\}^n\to\{0,1\}$, $\frac{\psens(f)}{3}\le \EC(f)\le \min\big\{O(\D(f)^3),3n-1\big\}$ holds, where the function $\psens(f)$ is defined as the positive sensitivity of $f$, i.e., the maximum of the number of sensitive bits $i\in\{1,2,\ldots,n\}$ with $x_i=1$ over all possible inputs $x$ \cite{dinesh2018new}. However, positive sensitivity may give weak energy complexity lower bounds for some rather fundamental functions. For example, the positive sensitivity of $\OR$ function $f(x_1,...,x_n)=x_1\vee...\vee x_n$ is only $1$. Therefore, Dinesh et al. asked 2 open problems on this issue: 
	\begin{enumerate}
	    \item Does the inequality $\D(f)\le \poly(\EC(f))$ always hold?
	    \item Give a non-trivial lower bound of the energy complexity for $\OR$ function.
	\end{enumerate}

	\subsection{Our Results}

	Throughout this paper, we use completely different method to achieve better bounds from both sides, which explores a polynomial relationship between energy complexity and decision tree complexity and answers two open problems asked by Dinesh et al. Furthermore, we also construct an explicit circuit computing $\OR$ function to match this lower bound.
	
	First, in section 3 we show the upper and lower bounds of energy complexity by decision tree complexity:
	
	\begin{theorem}\label{thm-upperbounds1}
	   For any Boolean function $f:\{0,1\}^n\to\{0,1\}$, 
	   $$\EC(f)\le\frac{1}{2}\D(f)^2+O(\D(f)).$$
	\end{theorem}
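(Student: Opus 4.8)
The plan is to take an optimal decision tree $T$ for $f$, of depth $d := \D(f)$, and to build a circuit over $\{\vee_2,\wedge_2,\neg\}$ that \emph{simulates the descent} along the unique root-to-leaf path selected by the input, while guaranteeing that essentially only gates lying ``on'' that path ever output $1$. The central object is, for each node $v$ of $T$, an \emph{indicator wire} $I_v$ that equals $1$ exactly when the computation path passes through $v$. Since every input selects exactly one node at each level, the family $\{I_v : v \text{ at level } j\}$ is one-hot; maintaining this invariant level by level is what keeps the energy small.

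First I would set $I_{\mathrm{root}} = 1$ and, for a node $v$ at level $j$ querying variable $x_{i(v)}$, define its $1$-child indicator as the gate $a_v := I_v \wedge x_{i(v)}$. Because only the active node at level $j$ has $I_v = 1$, at most one $a_v$ fires, and $\bigvee_{v} a_v$ equals the value $s_j$ of the variable queried at the active level-$j$ node; I would realize this $\OR$ by a balanced binary tree. The complement needed for the left branch is then obtained from the single wire $\neg s_j$, and the $0$-child indicator is $b_v := I_v \wedge \neg s_j$, which agrees with $I_v \wedge \neg x_{i(v)}$ precisely because $s_j = x_{i(v)}$ whenever $I_v = 1$. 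The output is $f = \bigvee_{v \text{ a } 1\text{-leaf}} I_v$, again via a binary $\OR$-tree.

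For the energy count I would argue that, on any fixed input, exactly one node is active per level, so among all the $a_v$'s and $b_v$'s at level $j$ at most one of each fires, while the single negation gate $\neg s_j$ fires only when the path descends left; summed over levels these contribute $O(d)$. The dominant term comes from the $\OR$-trees: a balanced tree over the $\le 2^j$ nodes at level $j$ has depth $\le j$, and when exactly one of its inputs is $1$ the firing gates form a single leaf-to-root chain, so at most $j$ of them fire; likewise the output tree contributes $\le d$. Hence the total energy is at most $\sum_{j=0}^{d-1} j + O(d) = \tfrac12 d^2 + O(d)$, as claimed.

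The step I expect to be the main obstacle is controlling the negations. A naive DNF-style circuit would materialize $\neg x_i$ for every queried variable, and on a generic input up to $n$ (indeed up to $2^{d-1}$) such gates fire, which is far too many --- this is exactly why a bound depending only on $d$ is nontrivial. The resolution is to never negate a raw input but instead to negate the single \emph{selected} wire $s_j$ at each level, which fires at most once per level and $O(d)$ times overall. Verifying the one-hot invariant and the ``single chain'' behaviour of the $\OR$-trees is then routine, and the summation yields the stated bound.
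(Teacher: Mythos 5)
Your proposal is correct and is essentially the paper's own proof in different clothing: your one-hot indicators $I_v$ are exactly the paper's level-$k$ minterms $\prod_{i=1}^{k} z_i$ with $z_i\in\{y_i,\bar y_i\}$, your selected wire $s_j$ is the paper's branch-record variable $y_j$, and both arguments rest on the same key trick of negating only the single per-level branch bit (never raw inputs) so that the depth-$j$ $\OR$-gadget at level $j$ dominates, giving $\sum_{j} j + O(\D(f)) = \frac{1}{2}\D(f)^2 + O(\D(f))$. The only cosmetic difference is that you maintain the indicators recursively on the actual tree (handling short branches and realizing $I_{\mathrm{root}}=1$ by wiring the root's children directly to $x_{i(\mathrm{root})}$ and its negation) while the paper phrases the same circuit via the auxiliary variables $y_1,\dots,y_{\D(f)+1}$ on a completed tree.
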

	
	\begin{theorem}\label{thm-lowerbounds1}
	   For any Boolean function $f:\{0,1\}^n\to\{0,1\}$, 
	   $$\EC(f)=\Omega(\sqrt{\D(f)}).$$
	\end{theorem}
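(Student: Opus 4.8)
The plan is to prove the equivalent statement $\D(f)=O(\EC(f)^2)$: starting from an energy-optimal circuit $\mathcal{C}$ computing $f$ with $\EC(\mathcal{C})=k=\EC(f)$, I would build a decision tree for $f$ of depth $O(k^2)$. Call a gate \emph{active} on an input if it outputs $1$, so that by definition at most $k$ inner gates are active on any input. The backbone would be the classical inequality $\D(f)\le \C_0(f)\cdot\C_1(f)$, so it suffices to establish the key bound $\C_0(f)\cdot\C_1(f)=O(k^2)$ and then invoke the standard recursion behind $\D\le\C_0\C_1$ (repeatedly query a minimal certificate, dropping the dual certificate complexity by one in each branch) to assemble the tree.

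To extract certificates from $\mathcal{C}$ I would trace the circuit downward from the output gate. Certifying that a gate is active needs only \emph{one} active child at a $\vee$-gate and both children at an $\wedge$-gate, so the ``active skeleton'' of such a trace stays inside the set of active gates and has size at most $k$. The difficulty is that a $\neg$-gate flips the goal to certifying that its child is \emph{inactive}, and certifying an inactive $\vee$-gate forces the trace into \emph{both} subtrees. Inactive gates are not bounded by $k$, so these negative subgoals can blow up, and this is precisely why neither certificate is individually $O(k)$: for $f=\OR$ one has $\C_1=1$ but $\C_0=n$, while for $\neg\OR$ the roles reverse, yet in both cases $\D=n$ and the energy is $\Theta(\sqrt{n})$. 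Hence the most one can hope for is a product bound, with the two factors trading off against each other.

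The main obstacle is therefore proving $\C_0(f)\cdot\C_1(f)=O(k^2)$ despite this asymmetry, which forces a genuinely two-sided argument rather than a bound on each factor separately. The handle I would use is that every $1$-certificate and every $0$-certificate must conflict on some variable, and such a conflict has to propagate through $\mathcal{C}$ along gates that are active on at least one of the two relevant inputs; since only $k$ gates are active per input, I would try to charge the $|S_1|\cdot|S_0|$ variable pairs of a worst $1$-certificate $S_1$ and $0$-certificate $S_0$ against these $O(k)\times O(k)$ active gates. Concretely I expect to run the $\D\le\C_0\C_1$ recursion adaptively, always querying the cheaper of the two circuit-derived certificates and arguing that each of the $\le O(k)$ rounds pins down a fresh active gate while costing $O(k)$ queries. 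Verifying that this charging is sound---that the expensive negative subgoals in one direction are paid for by the cheapness of the certificates in the other---is the crux; the $\OR$ function, for which $\D=n$ and we show $\EC=\Theta(\sqrt{n})$, is the tight example and a guide for why any one-sided estimate must fail.
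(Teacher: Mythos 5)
Your reduction to the claim $\C_0(f)\cdot \C_1(f)=O(\EC(f)^2)$ is exactly where the proposal breaks: that bound is never proved, the charging sketch offered for it has no working mechanism, and it is not even clear the bound is true. Certificates extracted by tracing the circuit are energy-bounded only on the ``positive'' side: certifying an activated $\wedge$/$\vee$-gate stays among the at most $k$ activated gates, but, as you yourself note, a $\neg$-gate flips the goal and certifying a deactivated $\vee$-gate fans out into both children, through gates whose number is not controlled by $k$ on any single input. Nothing in the proposal converts ``at most $2k$ activated gates across the two relevant inputs'' into a bound on the \emph{product} of the two certificate sizes; the sensitive-path argument you gesture at (values differing along a path between a $1$-input and a $0$-input) bounds the length of one path by $O(k)$, not the size of any certificate. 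Your round accounting is also inconsistent with your own tight example: for $\OR_n$ the adaptive recursion behind $\D\le \C_0\C_1$ runs for up to $\C_0=n=k^2$ rounds of cost $\C_1=1$, not ``$O(k)$ rounds of cost $O(k)$,'' so the claim that each round ``pins down a fresh activated gate'' among only $O(k)$ of them cannot be right as stated. What would be needed is a proof that energy governs the trade-off $(\text{rounds})\times(\text{cost per round})=O(k^2)$, and no such mechanism is given; note the paper's theorem itself only yields $\C_0(f)\,\C_1(f)\le \D(f)^2=O(\EC(f)^4)$, so your key claim is a genuine strengthening whose truth is open as far as this paper goes.

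The paper's actual proof avoids certificates entirely and instead exploits the structure of negations, which is the ingredient your sketch is missing. It uses two lemmas: a circuit with $k$ $\neg$-gates has $\EC\ge k$ (so there are few negations), and a \emph{monotone} circuit depending on $m$ input gates has $\EC\ge m-1$ (the all-ones input activates everything). Ordering the $\neg$-gates topologically, partitioning the variables by which $\neg$-gate first covers them, querying in that order to build a decision tree, and then restricting along a longest path one block at a time, each restriction fixes one $\neg$-gate while (by the monotone lemma applied to the sub-circuit under that $\neg$-gate) sacrificing only $o(\sqrt{\D(f)})$ of the path; after all $o(\sqrt{\D(f)})$ negations are fixed, a monotone circuit depending on $\D(f)-o(\D(f))$ variables remains, contradicting the energy bound. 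If you want to salvage your approach, the fix is essentially to re-introduce this negation-counting and restriction structure --- at which point you have reconstructed the paper's argument rather than found an alternative one.
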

	
	Note that there are many other Boolean function measures which have polynomial relationship with decision tree complexity. 
	According to the results listed in \cite{hatami2010variations}, we can easily derive the following corollary.
	
	\begin{corollary}
	$\EC(f)$ has polynomial relationship with $\bs(f)$, $\deg(f)$ and $\C(f)$, where $\bs(f)$ is block sensitivity, $\deg(f)$ is degree, and $C(f)$ is certificate complexity.
	\end{corollary}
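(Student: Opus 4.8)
The plan is to use Theorems~\ref{thm-upperbounds1} and~\ref{thm-lowerbounds1} to establish that $\EC(f)$ is polynomially equivalent to $\D(f)$, and then to transport this equivalence to $\bs(f)$, $\deg(f)$ and $\C(f)$ through the classical polynomial relationships among those measures. The only genuinely new ingredient is the $\EC$--$\D$ equivalence, which the two theorems already supply; everything else is an exercise in composing known inequalities.

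First I would combine the two theorems into a clean two-sided statement. Theorem~\ref{thm-upperbounds1} gives $\EC(f)=O(\D(f)^2)$. Reading Theorem~\ref{thm-lowerbounds1} as $\EC(f)\ge c\sqrt{\D(f)}$ for some absolute constant $c>0$ and squaring yields $\D(f)=O(\EC(f)^2)$. Hence each of $\EC(f)$ and $\D(f)$ is bounded above by a fixed polynomial in the other; that is, they are polynomially related in the usual sense, with absolute constants independent of $f$.

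Next I would invoke the standard body of results collected in~\cite{hatami2010variations}, according to which $\D(f)$, $\bs(f)$, $\deg(f)$ and $\C(f)$ are pairwise polynomially related: for each measure $M\in\{\bs,\deg,\C\}$ there exist fixed polynomials $p_M$ and $q_M$ with $M(f)\le p_M(\D(f))$ and $\D(f)\le q_M(M(f))$ for all $f$. Since the relation ``polynomially related'' is transitive — the composition of two polynomials is again a polynomial — I would chain the $\EC$--$\D$ equivalence from the previous step with the $\D$--$M$ equivalence for each such $M$, obtaining polynomials that relate $\EC(f)$ to $M(f)$ in both directions. This gives the corollary.

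I do not expect a real obstacle here, since the argument is purely a chaining of established bounds. The one point that warrants care is uniformity: each cited inequality must hold for \emph{every} $f$ with constants that do not depend on $f$, so that composing them introduces no function-dependent blow-up. This is guaranteed by the uniform form of the inequalities in~\cite{hatami2010variations} and of Theorems~\ref{thm-upperbounds1} and~\ref{thm-lowerbounds1}, so the composition is legitimate and the polynomial relationship is preserved throughout.
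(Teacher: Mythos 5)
Your proposal is correct and takes essentially the same route as the paper: the paper also combines Theorems~\ref{thm-upperbounds1} and~\ref{thm-lowerbounds1} into the two-sided equivalence $\EC(f)=O(\D(f)^2)$ and $\D(f)=O(\EC(f)^2)$, and then invokes the polynomial relationships among $\D(f)$, $\bs(f)$, $\deg(f)$ and $\C(f)$ collected in~\cite{hatami2010variations}, exactly as you chain them. Your extra care about uniformity of the constants is a fair point but introduces nothing beyond what the cited bounds already guarantee.
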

	
	Second, in section 4 we also show upper and lower bounds of energy complexity with respect to the number of variables.
	
	\begin{theorem}\label{thm-upperbounds2}
	   For any Boolean function $f:\{0,1\}^n\to\{0,1\}$, 
	   $$\EC(f)\le n-2+2\D(f)\leq 3n-2.$$
	\end{theorem}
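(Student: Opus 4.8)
The plan is to turn an optimal depth-$\D(f)$ decision tree $T$ for $f$ into a circuit over $\{\vee_2,\wedge_2,\neg\}$ built around \emph{guarded reachability indicators}, arranged so that on every input the only gates that can output $1$ are the negation gates together with the gates lying on the single active root-to-leaf path. First I would compute $\neg x_i$ for each variable using one shared $\neg$-gate per variable. Because a gate computing $\neg x_i$ outputs $1$ exactly when $x_i=0$, irrespective of anything else in the circuit, the number of activated $\neg$-gates on input $x$ equals the number of zero coordinates of $x$, hence is at most $n$. This is the source of the additive $n$ term, and it is essentially unavoidable (see the obstacle below).

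Next I would define, for every node $v$ of $T$, a reachability indicator $r_v$ that equals $1$ iff the computation path on input $x$ passes through $v$: set $r_{\mathrm{root}}=1$ (a constant, no gate), and for an internal node querying $x_i$ let the two children be guarded by $r_v\wedge x_i$ and $r_v\wedge\neg x_i$, reusing the precomputed literals. The crucial effect of this guarding is that $r_v=0$ for every $v$ off the active path, which \emph{forces} the corresponding $\wedge$-gate to output $0$; thus among all reachability $\wedge$-gates only those on the active path (which has length at most $\D(f)$, and whose depth-$1$ indicators are bare literals needing no gate) can be activated, giving at most $\D(f)-1$ active $\wedge$-gates. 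For the output I would combine the accepting-leaf indicators using the tree structure itself: define $S_v=S_{v_0}\vee S_{v_1}$ at internal nodes with $S_\ell=r_\ell$ at $1$-leaves and $S_\ell=0$ at $0$-leaves, so that $S_{\mathrm{root}}=f(x)$ and, on a given input, the activated $\vee$-gates are exactly the ancestors of the (unique) reached $1$-leaf. After pruning redundant nodes from $T$ this contributes at most $\D(f)-1$ active $\vee$-gates as well. Summing the three layers yields $\EC(f)\le n+2\D(f)-2$, and the second inequality follows immediately since $\D(f)\le n$, so $n-2+2\D(f)\le 3n-2$.

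The main obstacle is controlling \emph{which} gates can be simultaneously active, and in particular the $\neg$-gates. The naive recursive substitution $f=(\neg x_i\wedge f_0)\vee(x_i\wedge f_1)$ fails badly for energy, because the subcircuits for both branches are physically evaluated on every input and their gates activate even off the intended path; the reachability guard is precisely what repairs this for the $\wedge$- and $\vee$-gates. The guard does \emph{not} help the $\neg$-gates, whose output is determined solely by a single input bit, so one cannot drive their count below the Hamming-zero-weight of the input, and the all-zeros input can simultaneously force $n$ active negations, a length-$\D(f)$ active path, and an accepting leaf. The delicate part of the argument is therefore the joint accounting showing these worst cases stack to exactly $n+2\D(f)-2$ rather than more, which is where reducing $T$ (removing nodes both of whose children carry the same label) is used to shave the final constant from the output layer.
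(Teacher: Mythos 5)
Your proposal is correct and follows essentially the same construction as the paper: one shared $\neg$-gate per variable (at most $n$ active), guarded reachability $\wedge$-gates along the decision-tree paths with the root's children merged into bare literals (at most $\D(f)-1$ active), and an output OR over the $1$-leaf indicators with at most one hot input. The only difference is cosmetic: the paper routes the output layer through a balanced $\OR_m$ gadget with energy $\lceil\log m\rceil\le\D(f)-1$, while you OR along the tree structure itself and use pruning of same-labelled sibling leaves to save the last gate at maximal depth --- both yield the same $\D(f)-1$ bound.
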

	
	\begin{theorem}\label{thm-lowerbounds2}
	   For any non-degenerated Boolean function $f:\{0,1\}^n\to\{0,1\}$, 
	   $$\EC(f)=\Omega(\log_2 n).$$
	\end{theorem}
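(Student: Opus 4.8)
The plan is to prove the sharper statement that a circuit of small energy can compute only functions with few relevant variables: if $\mathcal{C}$ computes $f$ with $\EC(\mathcal{C})=E$, then $f$ depends on at most $2^{E}$ variables, so that a non-degenerated $f$ (which by definition depends on all $n$ of its inputs) forces $E\ge\log_2 n$. Fix an optimal circuit $\mathcal{C}$. For a gate $g$ let $e(g)$ denote the maximum, over all inputs, of the number of gates in the subcircuit rooted at $g$ that output $1$ (counting $g$ itself), and let $N(g)$ denote the number of variables on which the subfunction computed at $g$ genuinely depends; note $e(g_{\mathrm{out}})\le E$ and $N(g_{\mathrm{out}})=n$. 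I will establish, by induction on the circuit from the inputs upward, the invariant $N(g)\le 2^{e(g)}$.

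The base case is an input gate, where $N=1$ and $e=0$. The inductive step splits on the type of $g$. A negation $g=\neg u$ satisfies $N(g)=N(u)$, and one checks that $e(g)\ge e(u)$, since on the input witnessing $e(u)$ the subcircuit of $g$ is at least as active as that of $u$ (the negation can only contribute an extra active gate). This is the point that makes negations harmless: they may shift activity between the output-$0$ and output-$1$ regimes but never decrease $e$, so neither $N$ grows nor $e$ shrinks. For a binary gate $g=u\wedge v$ or $g=u\vee v$ the bound $N(g)\le N(u)+N(v)$ is immediate, because a variable irrelevant to both children cannot influence $g$. Hence the invariant follows once I prove the key inequality $e(g)\ge\max\{e(u),e(v)\}+1$: indeed then $N(g)\le 2^{e(u)}+2^{e(v)}\le 2\cdot 2^{\max\{e(u),e(v)\}}\le 2^{e(g)}$.

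The heart of the argument, and the step I expect to be the main obstacle, is this key inequality for binary gates. The naive attempt---take an input maximizing the activity of one child's subcircuit and hope it also turns $g$ on---can fail when $u$ and $v$ share variables and their activities are anti-correlated, so that every input making $u$ highly active silences $v$ and hence $g$. I plan to handle this by showing that such perfect suppression can occur only when $g$ degenerates: if making $u$ active always pins $v$ to its controlling value and thereby fixes the output of $g$, then $g$ is in effect insensitive to the variables feeding $u$, contradicting that they were counted in $N(g)$. Concretely, for a genuinely live binary gate I will perturb an activity-maximizing input of a child so as to additionally switch the gate on, gaining at least one more active gate; the care lies in the case analysis on $\wedge$ versus $\vee$ and on which child is controlling. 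Granting the inequality, the induction gives $n=N(g_{\mathrm{out}})\le 2^{e(g_{\mathrm{out}})}\le 2^{E}$, that is $\EC(f)\ge\log_2 n=\Omega(\log_2 n)$. Finally, to see the bound is tight I will exhibit a circuit for $\ADDRESS$ on $k$ address bits and $2^{k}$ data bits built from a depth-$k$ decoder in which exactly one gate per level is active on any input, yielding energy $O(k)=O(\log_2 n)$ and matching the lower bound.
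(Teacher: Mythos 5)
Your induction collapses at exactly the point you flagged: the key inequality $e(g)\ge\max\{e(u),e(v)\}+1$ is simply false for binary gates, and the rescue you propose --- that ``perfect suppression'' can occur only when $g$ degenerates --- is also false. Concretely, on variables $x_1,x_2$ build a chain of $\wedge$-gates each computing $x_1$ (say $c_1=x_1\wedge x_1$ and $c_{i+1}=c_i\wedge x_1$ for $i<k$), then set $u=c_k\vee x_2$, $v=\lnot x_1$, and $g=u\wedge v$. Then $e(u)=k+1$ (the whole chain plus $u$ fires when $x_1=1$) and $e(v)=1$. But in the subcircuit of $g$: any input with $x_1=1$ activates exactly the $k$ chain gates and $u$ (while $v$ and $g$ are off), the input $(0,1)$ activates only $u$, $v$, $g$, and $(0,0)$ activates only $v$. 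So $e(g)=\max\{k+1,3\}=k+1=\max\{e(u),e(v)\}$ for every $k\ge 2$, with no ``$+1$''. Yet $g$ computes $\lnot x_1\wedge x_2$, which is non-constant and sensitive to both variables feeding $u$; your inference ``suppression on the activity-maximizing inputs $\Rightarrow$ $g$ insensitive to $u$'s variables'' is a non sequitur, because $N(g)$ counts variables relevant \emph{somewhere} on the cube, while the suppression only concerns the inputs where $u$'s subcircuit is maximally active ($x_1=1$ here); the sensitivity lives in the other region ($x_1=0$). With only the true inequality $e(g)\ge\max\{e(u),e(v)\}$, your arithmetic yields $N(g)\le 2^{e(u)}+2^{e(v)}\le 2^{e(g)+1}$, and this lost factor of $2$ compounds at every gate, so the induction establishes nothing. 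Note also that your target invariant $N(g)\le 2^{e(g)}$ asserts $\EC(f)\ge\log_2 n$, which is \emph{stronger} than the theorem's $\Omega(\log_2 n)$ as proved in the paper (constant $\tfrac12$); nothing in the paper certifies that the constant $1$ is attainable, so even a repaired induction would be proving an open strengthening.

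The paper's proof sidesteps the anti-correlation problem entirely and is worth internalizing as the fix. Since $f$ is non-degenerated, the output gate covers all $n$ input gates, and fan-in $2$ forces some input gate $x_i$ to lie at distance at least $\log_2 n$ from the output. Pick $\hat x$ with $f(\hat x)\ne f(\hat x\oplus e_i)$ (such $\hat x$ exists by non-degeneracy); then there is a path $\mathcal{P}$ from $x_i$ to the output, of length $\ell\ge\log_2 n$, along which \emph{every} gate flips when $x_i$ is flipped. Hence each gate of $\mathcal{P}$ is activated under exactly one of the two inputs, so $\EC(\mathcal{C},\hat x)+\EC(\mathcal{C},\hat x\oplus e_i)\ge\ell$, and $\EC(\mathcal{C})\ge\ell/2\ge\tfrac12\log_2 n$. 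This averaging over a sensitive pair of inputs is the idea your proposal is missing: it extracts activity from one global path rather than trying to propagate activity counts through every gate, which is exactly where anti-correlated children defeat you. Your tightness remark is fine in substance and matches the paper, which uses $\EADDR_{n,g}$ (with $n+2^n$ variables and $\EC=\Theta(n)$) for the same purpose.
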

	
	Note that these lower bounds are incomparable with each other, since for any non-degenerated Boolean function $f:\{0,1\}^n\to\{0,1\}$, we have $\Omega(\log_2 n)\le \D(f)\le O(n)$ and this result is essentially tight from both sides.
	
	Finally, in order to show the tightness of lower bounds, we examine the energy complexity of two specific function classes: $\OR$ functions and $\mathtt{EXTENDED}$ $\ADDRESS$ functions (see the definition in Section~\ref{sec:pre}).
	
	\begin{proposition}\label{prop-or}
	   For any positive integer $n$, $\EC(\OR_n) = \Theta(\sqrt{n})$.
	\end{proposition}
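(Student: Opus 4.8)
The statement has two halves, and the lower bound is essentially free. Since a decision tree for $\OR_n$ must read every coordinate in the worst case---an adversary can answer $0$ to the first $n-1$ queries, leaving the value undetermined---we have $\D(\OR_n)=n$. Feeding this into Theorem~\ref{thm-lowerbounds1} gives $\EC(\OR_n)=\Omega(\sqrt{\D(\OR_n)})=\Omega(\sqrt n)$. Thus the entire content of the proposition is the matching upper bound $\EC(\OR_n)=O(\sqrt n)$, for which I must exhibit an explicit circuit over $\{\vee_2,\wedge_2,\neg\}$ on which at most $O(\sqrt n)$ inner gates are active on every input.

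The plan for the upper bound is to lay the $n$ inputs out on a $\sqrt n\times\sqrt n$ grid and build a two-dimensional circuit that is economical simultaneously on sparse and on dense inputs. The two elementary gadgets pull in opposite directions: an $\OR$-tree over a block is silent (no active gate) when the block is all-zero but has up to (blocksize $-1$) active gates when the block is all-one, whereas the De Morgan dual $\neg\bigwedge_j\neg(\cdot)$ is silent on the all-one block but activates a gate for every $0$ it reads. The goal is to arrange these so that, on any input, the active gates can be charged to at most $O(\sqrt n)$ structural features of the grid---roughly the nonempty rows together with a lower-order correction coming from a single propagation path through the row-indicator layer---so that the activation count never exceeds the side length of the grid up to a constant factor.

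The key steps, in order, are: (i) fix the grid layout and a layer computing row indicators $R_i=\OR(\text{row } i)$; (ii) choose, for each layer, which gadget to use (primal $\OR$ versus dual $\AND$-of-negations) so that the number of active gates in that layer is bounded by $O(\sqrt n)$ for \emph{all} Hamming weights, not merely at the two extremes; (iii) combine the $\sqrt n$ row indicators into the output with a second gadget whose activation is again $O(\sqrt n)$; and (iv) verify the total activation bound uniformly over all inputs. The main obstacle is precisely the intermediate-Hamming-weight regime in steps (ii)--(iv): a naive blocking or recursion that treats blocks independently forces the costs of many simultaneously ``medium'' blocks to add up, and appears to yield only $\EC(\OR_n)=O(n^{3/4})$, so the construction must share structure across blocks and bound the active set globally rather than block by block. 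Once the $O(\sqrt n)$ circuit is in hand, combining it with the lower bound gives $\EC(\OR_n)=\Theta(\sqrt n)$ and, since the lower bound came from Theorem~\ref{thm-lowerbounds1}, simultaneously certifies that the $\Omega(\sqrt{\D(f)})$ bound is tight.
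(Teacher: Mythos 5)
Your lower bound is complete and identical to the paper's: the adversary argument gives $\D(\OR_n)=n$, and Theorem~\ref{thm-lowerbounds1} then yields $\EC(\OR_n)=\Omega(\sqrt{n})$. The upper bound, however, is a plan rather than a proof, and the gap it leaves open is exactly the crux of the proposition. You correctly observe that the two elementary gadgets fail at opposite extremes (an $\OR$-tree over a block costs up to $\sqrt{n}-1$ active gates on a dense block, while $\neg\bigwedge_j\neg(\cdot)$ costs one active gate per $0$ it reads), and you correctly flag the intermediate-weight regime as the obstacle: with independently computed row indicators $R_i=\OR(\text{row } i)$, the all-ones input already activates $\Omega(n)$ gates in the row-indicator layer alone, and an input with one $1$ per row activates $\Omega(\sqrt{n}\log n)$. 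But you stop at naming the difficulty (``the construction must share structure across blocks and bound the active set globally'') without supplying any mechanism that achieves this, so no $O(\sqrt{n})$ circuit is actually exhibited, and steps (ii)--(iv) of your outline remain unexecuted.

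The missing idea, which is the heart of the paper's construction, is a \emph{sequential switch-off cascade} rather than a parallel grid: order the $\sqrt{n}$ blocks, let $g_k$ compute the $\OR$ of the first $k$ blocks, and gate every variable of block $k+1$ through an $\wedge$-gate with $h_k=\neg g_k$ before it enters that block's $\OR$-tree. Then on any nonzero input, every block after the first nonzero block $k$ is electrically silent---all its $\wedge$-gates output $0$, hence so does its entire $\OR$-tree---so the active gates are only the $h_i$ for $i<k$, the gates inside block $k$ (at most $O(\sqrt{n})$ of them), and the chain gates $g_i,g_i'$ for $i\ge k$, for a total of $O(\sqrt{n})$; on the all-zero input exactly the $\sqrt{n}$ negation gates $h_i$ are active. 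This prefix-masking is precisely what makes the cost of many simultaneously ``medium'' blocks collapse to the cost of a single block, and without it (or an equivalent device) your grid-and-row-indicator sketch cannot be completed to an $O(\sqrt{n})$ bound.
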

	
	\begin{proposition}\label{prop-eaddr}
	   For any positive integer $n$ and an arbitrary Boolean function $g:\{0,1\}^n\to\{0,1\}$, $\EC(\EADDR_{n,g}) = \Theta(n)$.
	\end{proposition}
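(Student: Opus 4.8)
The plan is to prove the two directions separately: the upper bound needs an explicit energy-efficient circuit, while the lower bound follows from non-degeneracy together with Theorem~\ref{thm-lowerbounds2}. For the upper bound $\EC(\EADDR_{n,g}) = O(n)$ I would construct an explicit circuit rather than invoke Theorem~\ref{thm-upperbounds2}, since the input length here is $\Theta(2^n)$ and the estimate $n'-2+2\D$ would be useless. The heart of the construction is an energy-efficient multiplexer driven by the $n$ address bits $a_1,\dots,a_n$. First build a decoder as a depth-$n$ binary tree of $\AND$ gates, where the node indexed by a prefix $(b_1,\dots,b_k)$ computes $\bigwedge_{i\le k}(a_i\equiv b_i)$ from its parent and one address literal. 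The key observation is that for a fixed address exactly one node per level evaluates to $1$, so only $n$ $\AND$ gates fire in the whole decoder, plus at most $n$ $\neg$ gates for the negated literals. Next, gate each data cell by its matching indicator and feed the results into an $\OR$-tree; since only the addressed cell can be nonzero, at most one gating $\AND$ is active and only the single root-to-leaf path of the $\OR$-tree (length $n$) carries a $1$, contributing $O(n)$ further active gates. Finally, the contribution of $g$ is computed by a separate subcircuit on its $O(n)$ relevant inputs whose energy is $O(n)$ by Theorem~\ref{thm-upperbounds2} applied to $g$ alone, and this is combined with the selected value through $O(1)$ extra gates. Summing these disjoint contributions yields $\EC(\EADDR_{n,g}) = O(n)$.

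For the lower bound $\EC(\EADDR_{n,g}) = \Omega(n)$ I would first verify that $\EADDR_{n,g}$ is non-degenerate for every $g$: each data cell is relevant because pointing the address at it and flipping it flips the output, and each address bit is relevant because a suitable choice of data makes flipping that bit change the selected cell and hence the output, independently of $g$. Since the input length is $N = 2^{\Theta(n)}$, we have $\log_2 N = \Theta(n)$, so Theorem~\ref{thm-lowerbounds2} gives $\EC(\EADDR_{n,g}) = \Omega(\log_2 N) = \Omega(n)$. Combining the two bounds yields $\EC(\EADDR_{n,g}) = \Theta(n)$.

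The main obstacle will be the energy accounting in the upper-bound circuit: a naive multiplexer activates $\Omega(2^n)$ gates on the all-ones data input, and even after introducing the decoder one must argue carefully that (i) exactly one decoder node per level fires, (ii) only a single gating gate and a single root path of the combining $\OR$-tree can carry a $1$, and (iii) the subcircuit for $g$ interacts with the selection through only a constant number of gates, so that the three contributions \emph{add} to $O(n)$ rather than multiply. Verifying non-degeneracy uniformly over all $g$, including degenerate-looking choices, is a minor additional point.
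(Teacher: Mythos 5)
Your proof is correct, and the upper bound takes a genuinely different route from the paper. Your lower bound (non-degeneracy of $\EADDR_{n,g}$ for every $g$, then Theorem~\ref{thm-lowerbounds2} with input length $n+2^n$) is exactly the paper's argument, and your multiplexer is essentially the paper's circuit for $\ADDR_n$ in disguise: the paper obtains it by applying the decision-tree construction of Theorem~\ref{thm-upperbounds2} to the natural depth-$(n+1)$ tree of $\ADDR_n$ and observing that the $\neg$-gates on the data variables $y_j$ are unused and can be deleted, which is the same one-active-path-per-level accounting you carry out by hand. The real divergence is in how $g$ enters. You compute $g$ explicitly by a separate subcircuit of energy at most $3n-2$ (a legitimate application of Theorem~\ref{thm-upperbounds2} to $g$ alone, since $g$ has only $n$ inputs) and combine it with the selected cell by a constant-size XNOR gadget; the energies add since only inner gates are counted and the parts share only input gates. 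The paper instead never computes $g$ at all: it takes two copies $\mathcal{C}_0,\mathcal{C}_1$ of the $\ADDR_n$ circuit, hard-wires $g$ into constant substitutions of the data inputs (in $\mathcal{C}_0$ replace $y_x$ by the constant $1$ when $g(x)=1$; in $\mathcal{C}_1$ replace $y_x$ by $0$ when $g(x)=0$), and outputs $\lnot\mathcal{C}_0 \vee \mathcal{C}_1$, giving $\EC \le \EC(\mathcal{C}_0)+\EC(\mathcal{C}_1)+2$. Both yield $\Theta(n)$; your version is more modular and makes explicit that the $O(n)$ bound for arbitrary $g$ is what saves the construction, while the paper's two-copy trick is strictly more robust --- it would still work if the ``correcting'' information were too expensive to compute by a low-energy subcircuit, which is why the authors reuse precisely this trick for the class $\mathcal{H}_n$ in their concluding discussion. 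Your three flagged obstacles (one active decoder node per level, a single active gating gate and $\OR$-tree path, additivity of the three contributions) are all resolved correctly in your argument.
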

	
	Note that $\D(\OR_n) = n$ and the number of variables in $\EADDR_{n,g}$ is $ n+2^n$, which shows that the lower bounds in Theorem \ref{thm-lowerbounds1} and \ref{thm-lowerbounds2} are tight. The proofs of these two propositions are in the appendix.
	
    %\subsection{Related works}
    %Besides the standard basis of $\{\vee_2,\wedge_2,\neg\}$, there are other work considering about the energy complexity over bases of threshold gates \cite{uchizawa2006computational,uchizawa2010energy,uchizawa2008exponential} and unate gates \cite{uchizawa2011size}.
    %\par In particular, since there exists similarity between human brain and energy complexity over threshold gate basis, a natural question readily comes up: can Boolean functions be computed with rather few activated gates over threshold gate basis? In order to answer this question, or more precisely, to give lower and upper bounds, plenty of studies were motivated \cite{amano2005complexity,hajnal1993threshold,haastad1991power,razborov1993nomega,uchizawa2006computational}. 
    %\par In addition, there are researches concerning the relationship between energy complexity and the largest fan-in gates \cite{suzuki2013energy}. As for application, it is interesting to consider how to balance the error tolerance and the total energy of the circuit \cite{antoniadis2014energy,barcelo2015almost}.
	
	\section{Preliminaries}\label{sec:pre}
	
	In the following context, we denote $(\underbrace{0,\ldots,0}_{i-1},1,\underbrace{0,\ldots,0}_{n-i})$ as $e_i$, $\{1,2,\ldots,n\}$ as $[n]$, and the cardinality of set $S$ as $|S|$ or $\#S$. 
	
	\subsection{Boolean Function}
    
    A Boolean function $f$ is a function mapping $\{0,1\}^n$ to $\{0,1\}$, where $n$ is a positive integer. We say a Boolean function $f:\{0,1\}^n\to\{0,1\}$ \textit{depends on $m$ variables} if there exists $S\subseteq[n],|S|=m$ and for any $i\in S$, there exists $x\in\{0,1\}^n$ such that $f(x)\neq f(x\oplus e_i)$; and when $f$ depends on all $n$ variables, we say $f$ is \textit{non-degenerated}. 
    
    In section \ref{sec:dtc}, we give the definition of decision tree complexity, a Boolean function measure mainly discussed in this paper. As to other Boolean function measures such as block sensitivity, certificate, and degree, see \cite{buhrman2002complexity} for definition.
    
    Next we define a new Boolean function class called $\tt EXTENDED$ $\ADDRESS$ function, which is an extension of the well-known $\ADDRESS$ function.
    
    \begin{definition}\label{defi:addr} 
        Given integer $n$, the address function $\ADDR_n:\{0,1\}^{n+2^n}\to\{0,1\}$ is defined as
		$$
		    \ADDR_n(x_1,..,x_n,y_0,\dots,y_{2^n-1}) =
		        y_{x_1x_2\dots x_n}.
		$$
	\end{definition}
	
	\begin{definition}\label{defi:3} 
	    Given integer $n$ and an arbitrary Boolean function $g:\{0,1\}^n\to\{0,1\}$, we define the extended address function $\EADDR_{n,g}:\{0,1\}^{n+2^n}\to\{0,1\}$ as 
		$$
		    \EADDR_{n,g}(x_1,..,x_n,y_0,\dots,y_{2^n-1}) = \begin{cases}
		        y_{x_1x_2\dots x_n},        & g(x_1,\dots,x_n)=1\\
		        \bar{y}_{x_1x_2\dots x_n},  & g(x_1,\dots,x_n)=0.
		    \end{cases}
		$$
	\end{definition}
	
	\subsection{Boolean Circuits}
	
	A Boolean circuit $\mathcal{C}$ over a basis $\mathcal{B}$ is a directed acyclic graph which has an output gate, input gates with in-degree $0$ representing variables, and other gates among the circuit basis $\mathcal{B}$. 
	
	For convenience, we have several definitions related with circuit gates here. For two gates $u$, $v$ in a Boolean circuit, we say
	\begin{itemize}
	    \item $u$ is an \textit{inner gate} if and only if $u$ is not an input gate.
	    \item $u$ is \textit{activated} under input $x$ if and only if $u$ outputs $1$ when the input of the circuit is $x$.
	    \item $u$ is \textit{deactivated} under input $x$ if and only if $u$ outputs $0$ when the input of the circuit is $x$.
	    \item $u$ is an \textit{incoming gate} of $v$ if and only if $u$ is an input of $v$.
	    \item $u$ \textit{covers} $v$ if and only if there exists a directed path in circuit from $v$ to $u$.
	\end{itemize}
	
	The circuit basis we mainly discuss is the standard basis $\mathcal{B}=\{\vee_2,\wedge_2,\neg\}$, which means $\lor$-gate with fan-in $2$, $\land$-gate with fan-in $2$ and $\neg$-gate with fan-in $1$. The fan-out of all kinds of gates is unlimited. 
	Particularly, a circuit over standard basis is called {\em monotone} if it does not contain any $\neg$-gate. For convenience, from now on, Boolean circuits and energy complexity are over the standard basis if not specified.
	
	In addition, if the input size of circuit $\mathcal C$ is $n$ and $\mathcal C$ computes $g:\{0,1\}^n\to\{0,1\}$, which depends on $m$ variables, we say $\mathcal C$ \textit{depends on $m$ input gates}.
	
	Next we give energy complexity a mathematical definition.
	
	\begin{definition}
		For a Boolean circuit $\mathcal C$ and an input $x$, the \textit{energy complexity} of $\mathcal{C}$ under $x$ (denoted by $\EC(\mathcal{C},x)$) is defined as the number of activated inner gates in $\mathcal{C}$ under the input $x$. Define the energy complexity of $\mathcal{C}$ as $\EC(\mathcal{C})=\max_{x\in\{0,1\}^n}\{\EC(\mathcal{C},x)\}$ and the energy complexity of a Boolean function $f$ as 
		$$\EC(f)=\min_{\substack{\mathcal{C}\mid \mathcal{C}(x)=f(x)\\\forall x\in\{0,1\}^n}}\EC(\mathcal{C}).$$
    \end{definition}

    \begin{remark}
        \label{remark1}
        Without loss of generality, the first and third structures in Fig. \ref{figu:5} are forbidden in the circuit, since we can replace them by the second or fourth one without increasing the energy complexity. So we can assume that any $\neg$-gate in the circuit has a non-$\neg$ incoming gate and any two $\neg$-gates do not share a same incoming gate.
    \end{remark}

    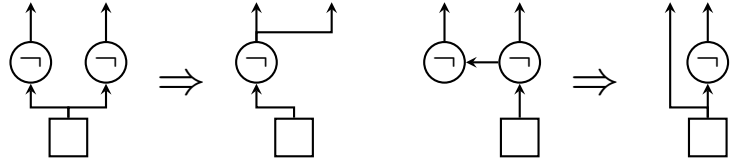
\begin{figure}[ht]
    	\centering
        \begin{tikzpicture}[->,thick,>=stealth]
    \sqrgate[in](0,0)();
    \neggate[neg1](-0.5,1);
    \neggate[neg2](0.5,1);
    \draw (in) -- ++(0,0.4) -| (neg1);
    \draw (in) -- ++(0,0.4) -| (neg2);
    \draw (neg1) -- ++(0,0.8);
    \draw (neg2) -- ++(0,0.8);
    
    \node[scale=2] at (1.5,0.7) {$\Rightarrow$};

    \sqrgate[in](3,0)();
    \neggate[neg1](2.5,1);
    \draw (in) -- ++(0,0.4) -| (neg1);
    \draw (neg1) -- ++(0,0.8);
    \draw (neg1) -- ++(0,0.4) -| ++(1,0.4);

    \sqrgate[in](6,0)();
    \neggate[neg1](6,1);
    \neggate[neg2](5,1);
    \draw (in) -- (neg1);
    \draw (neg1) -- (neg2);
    \draw (neg1) -- ++(0,0.8);
    \draw (neg2) -- ++(0,0.8);
    
    \node[scale=2] at (7,0.7) {$\Rightarrow$};

    \sqrgate[in](8.5,0)();
    \neggate[neg1](8.5,1);
    \draw (in) -- (neg1);
    \draw (neg1) -- ++(0,0.8);
    \draw (in) -- ++(0,0.4) -| ++(-0.5,1.4);
\end{tikzpicture}
    	\caption{Substructures related to $\neg$-gates}
    	\label{figu:5}
    \end{figure}
    
    \subsection{Decision Tree Complexity} \label{sec:dtc}
    
    Following from \cite{buhrman2002complexity}, a (deterministic) decision tree is a rooted ordered binary tree, where each internal node is labeled with a variable $x_i$ and each leaf is labeled with value $0$ or $1$. Given an input $x$, the tree is evaluated as follows. First, we start at the root. Then continue the process until we stop: 
    \begin{itemize}
    \item if we reach a leaf $v$, then stop; 
    \item otherwise query $x_i$, which is the labelled variable on current node, 
        \begin{itemize}
            \item if $x_i=0$, then move to its left child,
            \item if $x_i=1$, then move to its right child. 
        \end{itemize}
    \end{itemize}
    The output of the evaluation is the value on the final position. 
    A decision tree is said to compute $f$, if for any input $x$ the output after evaluation is $f(x)$.
    The complexity of a decision tree $\mathcal T$, denoted by $\D(\mathcal T)$, is its depth, i.e., the number of queries, made on the worst-case input. The decision tree complexity of a Boolean function $f$, denoted by $\D(f)$, is the minimum $\D(\mathcal{T})$ among all decision trees $\mathcal{T}$ computing $f$.

	\section{Upper Bounds of Energy Complexity}

     In this section, we will show upper bounds of energy complexity with respect to decision tree complexity, which improves the result in \cite{dinesh2018new}.

	\begin{nonumtheorem*}{Theorem \ref{thm-upperbounds2} restated}
	   For any Boolean function $f:\{0,1\}^n\to\{0,1\}$, 
	   $$\EC(f)\le n-2+2\D(f)\leq 3n-2.$$
	\end{nonumtheorem*}
	\begin{proof}
	    Since $\D(f)\leq n$, $n+2(\D(f)-1)\leq 3n-2$ holds naturally.
	    Now suppose $\mathcal T$ is a decision tree of $f$ with depth $\D(f)$. Denote the node set of $\mathcal T$ (including leaves) as $S$, where $v_{\it root}\in S$ is the root, $v_{\it left},v_{\it right}\in S$ are the left and right children of $v_{\it root}$ respectively. We also define $F: S\setminus\{v_{root}\} \to S$, where $F(v)$ is the father of node $v$ in $\mathcal T$. Furthermore, define ${\it vbs}: S\to \{x_1,\ldots,x_n\} \cup \{0,1\}$, where ${\it vbs}(v)$ indicates the label on node $v$, i.e., ${\it vbs}(v) = x_i$ means $v$ is labelled with $x_i$ and ${\it vbs}(v) = 0$ (or $1$) means $v$ is a leaf with value $0$ (or $1$). Define $S_0=\{v\in S\mid {\it vbs}(v)=0\}$ and $\widetilde S=S\backslash\left(\{v_{\it root}\}\cup S_0\right)$.\par
	    Based on $\mathcal T$, a circuit $\mathcal{C}$ can be constructed such that $\EC(\mathcal{C})\leq n+2(\D(f)-1)$ as follows. First, define all gates in $\mathcal C$: the input gates are $g_{x_1},\ldots,g_{x_n}$; the $\lnot$-gates are $g^{\lnot}_{x_1},\ldots,g^{\lnot}_{x_n}$ and the $\land$-gates are$g^{\land}_v,v\in\widetilde S$;
	    furthermore, $\mathcal{C}$ contains a unique $\lor$-gate $g^{\lor}$ as output gate with fan-in size $\#\{v\in S\mid {\it vbs}(v)=1\}$. Actually, $g^{\lor}$ is a sub-circuit formed by $\#\{v\in S\mid {\it vbs}(v)=1\}-1$ $\lor$-gates.
	    These gates are connected in following way:
	    \begin{enumerate}
	        \item For all $i\in [n]$, the input of $g^{\lnot}_{x_i}$ is $g_{x_i}$.
	        \item For all $v\in \widetilde S\backslash\{v_{\it left},v_{\it right}\}$, if $v$ is the right child of $F(v)$, the input of $g^{\land}_v$ is $g^{\land}_{F(v)}$ and $g_{{\it vbs}(F(v))}$; otherwise, the input of $g^{\land}_v$ is $g^{\land}_{F(v)}$ and $g^{\lnot}_{{\it vbs}(F(v))}$.
	        \item Merge $g^{\land}_{v_{\it left}}$ with $g^{\lnot}_{\it vbs(v_{\it root})}$; and merge $g^{\land}_{v_{\it right}}$ with $g_{\it vbs(v_{\it root})}$.
	        \item The input of $g^{\lor}$ is all the gates in $\{g^{\land}_v\mid {\it vbs}(v) = 1\}$.
	    \end{enumerate}\par 
	    
	    See Fig. \ref{figu:7} as an example, where
	    $$
	    \begin{aligned}
	    &{\it vbs}(v_{\it root})=x_1\\
	    &{\it vbs}(v_{\it left})=x_2\\
	    &{\it vbs}(v_{\it right})=x_3\\
	    \end{aligned}\qquad
	    \begin{gathered}
	    {\it vbs}(v_1)={\it vbs}(v_2)=x_4\\
	    {\it vbs}(v_3)={\it vbs}(v_5)={\it vbs}(v_8)=1\\
	    {\it vbs}(v_4)={\it vbs}(v_6)={\it vbs}(v_7)=0\\
	    \end{gathered}
	    $$
	    and
	    $S_0=\{v_4,v_6,v_7\},\widetilde S=\{v_{\it left},v_{\it right},v_1,v_2,v_3,v_5,v_8\}$.
	    
		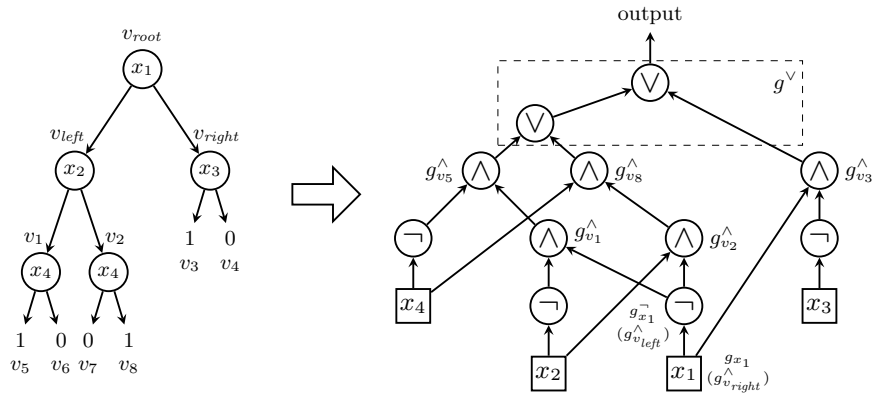
\begin{figure}[ht]
			\centering
            \scalebox{0.9}{\begin{tikzpicture}[thick,>=stealth,->]
    \node[circle,draw=black,inner sep=2pt] (x1) at (0,0.5) {$x_1$};
    \node[circle,draw=black,inner sep=2pt] (x2) at (-1,-1) {$x_2$};
    \node[circle,draw=black,inner sep=2pt] (x3) at (1,-1) {$x_3$};
    \node[circle,draw=black,inner sep=2pt] (x4) at (-1.5,-2.5) {$x_4$};
    \node[circle,draw=black,inner sep=2pt] (x44) at (-0.5,-2.5) {$x_4$};
    \node (x31) at (0.7,-2) {$1$}; \node (x30) at (1.3,-2) {$0$}; 
    \node (x41) at (-1.8,-3.5) {$1$}; \node (x40) at (-1.2,-3.5) {$0$};
    \node (x440) at (-0.8,-3.5) {$0$}; \node (x441) at (-0.2,-3.5) {$1$};
    
    \node at (0,1) {$v_{\it root}$}; 
    \node at (-1.1,-0.5) {$v_{\it left}$};
    \node at (1.1,-0.5) {$v_{\it right}$};
    \node at (-1.6,-2) {$v_1$};
    \node at (-0.4,-2) {$v_2$};
    \node at (0.7,-2.4) {$v_3$};
    \node at (1.3,-2.4) {$v_4$};
    \node at (-1.8,-3.9) {$v_5$};
    \node at (-1.2,-3.9) {$v_6$};
    \node at (-0.8,-3.9) {$v_7$};
    \node at (-0.2,-3.9) {$v_8$};

    \draw (x1) -- (x2); \draw (x1) -- (x3); \draw (x2) -- (x4);
    \draw (x2) -- (x44); \draw (x3) -- (x30);
    \draw (x3) -- (x31); 
    \draw (x4) -- (x40); \draw (x4) -- (x41);
    \draw (x44) -- (x440); \draw (x44) -- (x441);

    \node[draw, single arrow, minimum height=10mm, minimum width=8mm, single arrow head extend=2mm, anchor=west]
        at (2.2,-1.35) {};

    \sqrgate[x1](8,-4)($x_1$); \sqrgate[x2](6,-4)($x_2$); \sqrgate[x3](10,-3)($x_3$); \sqrgate[x4](4,-3)($x_4$);
    \neggate[nx1](8,-3); \neggate[nx2](6,-3); \neggate[nx3](10,-2); \neggate[nx4](4,-2);
    \andgate[e1](6,-2); \andgate[e2](8,-2); \andgate[e3](10,-1);
    \andgate[e4](5,-1); \andgate[e5](6.6,-1);
    \orgate[t1](5.8,-0.3); \orgate[t2](7.5,0.3);
    \node (out) at (7.5,1.3) {output};

    \foreach \x in {1,2,3,4}{
        \draw (x\x) -- (nx\x);
    }
    \draw (nx2) -- (e1); \draw (nx1) -- (e1);
    \draw (x2) -- (e2); \draw (nx1) -- (e2);
    \draw (x1) -- (e3); \draw (nx3) -- (e3);
    \draw (nx4) -- (e4); \draw (e1) -- (e4);
    \draw (x4) -- (e5); \draw (e2) -- (e5);
    \draw (e4) -- (t1); \draw (e5) -- (t1);
    \draw (e3) -- (t2); \draw (t1) -- (t2);
    \draw (t2) -- (out);
    
    \node at (6.6,-1.9) {$g^{\land}_{v_1}$}; 
    \node at (8.6,-2) {$g^{\land}_{v_2}$};
    \node at (10.6,-1) {$g^{\land}_{v_3}$}; 
    \node at (4.4,-1) {$g^{\land}_{v_5}$};
    \node at (7.2,-1) {$g^{\land}_{v_8}$};
    \node at (7.4,-3.3) {\footnotesize $\substack{g^{\lnot}_{x_1}\\(g^{\land}_{v_{\it left}})}$};
    \node at (8.8,-4) {\footnotesize $\substack{g_{x_1}\\(g^{\land}_{v_{\it right}})}$};
    
    \node[draw=black,thin,dashed,rectangle,minimum width=4.5cm,minimum height=1.25cm,inner sep=0pt] at (7.5,0) {};
    \node at (9.5,0.3) {$g^{\lor}$};
\end{tikzpicture}}
            \caption{Decision tree $\mathcal T$ and circuit $\mathcal{C}$}
            \label{figu:7}
		\end{figure}
		
	    The construction of $\mathcal{C}$ implies several facts:
	    \begin{itemize}
	        \item Under any input, $g^{\land}_{u_1},\ldots,g^{\land}_{u_{k-1}}$ is activated if $g^{\land}_{u_k}$ is activated, where $u_i=F(u_{i+1})$.
	        \item For sibling nodes $u, w\in S$, that $g^{\land}_u$ is activated implies $g^\land_w$ is deactivated, since one of $g^\land_u, g^\land_w$ receives $g_{{\it vbs}(F(u))}$ as input and the other uses $g^\lnot_{{\it vbs}(F(u))}$, which means they can not output $1$ simultaneously.
	    \end{itemize}  
	    These facts imply there are at most $\D(f)-1$ activated $\land$-gate under any input. Furthermore, at most one gate in $\{g^\land_v\mid {\it vbs}(v)=1\}$ is activated. It is easy to construct a circuit computing $\OR_m$ for $g^\lor$, whose energy complexity is no more than $\lceil \log m\rceil$ when promised that the input bits include at most one $1$. Thus, the contribution from $g^\lor$ is no more than
	    $$
	        \left\lceil \log\big(\#\{v\in S\mid {\it vbs}(v) = 1\}\big)\right\rceil \leq \D(f)-1.
	    $$
	    Also, the $\lnot$-gates in $\mathcal{C}$ contribute at most $n$ to the whole energy complexity under any input. Thus, $\EC(\mathcal{C}) \leq n+2(\D(f)-1)$.
	    
	    To justify that circuit $\mathcal{C}$ actually computes $f$, it suffices to show $g^\land_v$ outputs $1$ if and only if $v$ is queried in $\mathcal T$ during the evaluation process under some input. The proof goes as follows:
	    \begin{itemize}
	        \item First, for $v_{\it left}$ and $v_{\it right}$, the claim holds immediately. 
	        \item Then assume that for any node whose depth is less than $k$ in $\mathcal T$, the claim holds. Consider any $v\in\mathcal T$ of depth $k$. Without loss of generality, assume $v$ is the left child of $F(v)$; then the input of $g^\land_v$ is $g^\land_{F(v)}$ and $g^\lnot_{{\it vbs}(F(v))}$. Let $x_i={\it vbs}(F(v))$.
	        \begin{itemize}
	            \item When $g^\land_v$ is activated, $g^\land_{F(v)}$ is activated and $x_i=0$. By induction, $F(v)$ is queried in $\mathcal T$ and the chosen branch after querying is left, which is exactly $v$.
	            \item When $g^\land_v$ is deactivated, either $g^\land_{F(v)}$ is deactivated or $x_i=1$. If it is the former case, then by induction $F(v)$ is not queried; thus $v$ will not as well. Otherwise if $g^\land_{F(v)}$ is activated and $x_i=1$, then $F(v)$ is queried and the chosen branch should be right; thus $v$, which is the left child, will not be queried.
	        \end{itemize}
	    \end{itemize}
	    Thus by induction on the depth of nodes in $\mathcal T$, the claim holds for all $g^\land_v$, which completes the proof of Theorem \ref{thm-upperbounds2}.
	\end{proof}
	
	\begin{nonumtheorem*}{Theorem \ref{thm-upperbounds1} restated}
	   For any Boolean function $f:\{0,1\}^n\to\{0,1\}$, 
	   $$\EC(f)\le\frac{1}{2}\D(f)^2+O(\D(f)).$$
	\end{nonumtheorem*}
	\begin{proof}
	    Suppose $\mathcal T$ is a decision tree of $f$ with depth $\D(f)$. Then by Theorem \ref{thm-upperbounds2}, there is a circuit with energy complexity $n+2(\D(f)-1)$ constructed directly from $\mathcal T$, where $n$ comes from the $\neg$-gates of all variables. 
		In order to reduce the number of $\neg$-gates, we introduce $\D(f)$ additional variables $y_1,y_2,\dots,y_{\D(f)}$ in each level of $\mathcal T$ as a record log of the evaluation process on the tree, where $y_i=0$ means in the $i$-th level of $\mathcal T$, it chooses the left branch, and $y_i=1$ means to choose the right branch. For example, in Fig. \ref{figu:3} these additional variables are computed by $y_1=x_1$, $y_2=\bar{y}_1x_2+y_1x_3$, $y_3=\bar{y}_1\bar{y_2}x_4+\bar{y}_1y_2x_5+y_1\bar{y_2}x_6+y_1y_2x_7$, etc. Given the value of all $y_i$'s, the output of $f$ can be determined by reconstruct the evaluation path in $\mathcal T$; thus $f$ can be viewed as a function on $y_i$'s. Therefore, define 
		$$
		y_{\D(f)+1}=\sum_{z\in\prod\{y_i,\bar y_i\}} f(z)\prod_{i=1}^{\D(f)} z_i.
		$$
		Then given any input $x$, after determine all $y_i$'s ,$y_{\D(f)+1}=f(x)$.
		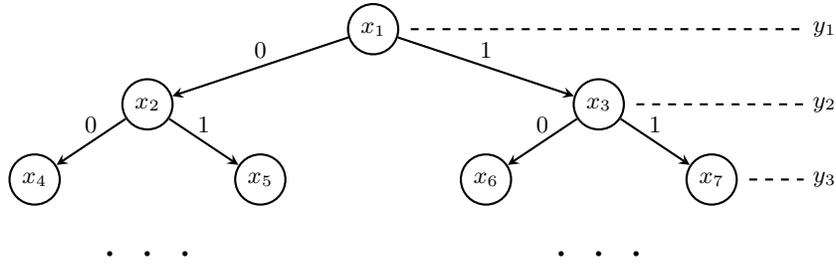
\begin{figure}[ht]
			\centering
                \begin{tikzpicture}[->,thick,>=stealth]
    \node[circle,thick,draw=black] (x1) at (0,0) {$x_1$};
    \node (y1) at (6,0) {$y_1$};
    \node[circle,thick,draw=black] (x2) at (-3,-1) {$x_2$};
    \node[circle,thick,draw=black] (x3) at (3,-1) {$x_3$};
    \node (y2) at (6,-1) {$y_2$};
    \node[circle,thick,draw=black] (x4) at (-4.5,-2) {$x_4$};
    \node[circle,thick,draw=black] (x5) at (-1.5,-2) {$x_5$};
    \node[circle,thick,draw=black] (x6) at (1.5,-2) {$x_6$};
    \node[circle,thick,draw=black] (x7) at (4.5,-2) {$x_7$};
    \node (y3) at (6,-2) {$y_3$};

    \draw (x1) -- node[above] {$0$} (x2); \draw (x1) -- node[above] {$1$} (x3);
    \draw (x2) -- node[above] {$0$} (x4); \draw (x2) -- node[above] {$1$} (x5);
    \draw (x3) -- node[above] {$0$} (x6); \draw (x3) -- node[above] {$1$} (x7);
    \draw[-,dashed] (x1) ++ (0.5,0) -- (y1);
    \draw[-,dashed] (x3) ++ (0.5,0) -- (y2);
    \draw[-,dashed] (x7) ++ (0.5,0) -- (y3);

    \foreach \x in {-3.5,-3.0,-2.5,3.5,3.0,2.5}{
        \node[scale=2] at (\x,-3) {$\cdot$};
    }
\end{tikzpicture}
			\caption{Decision tree and temporary variables}
			\label{figu:3}
		\end{figure}\par 
		Now construct a circuit using these temporary variables. (See Fig. \ref{figu:4} as an example of the gates for second level of the decision tree.) 
		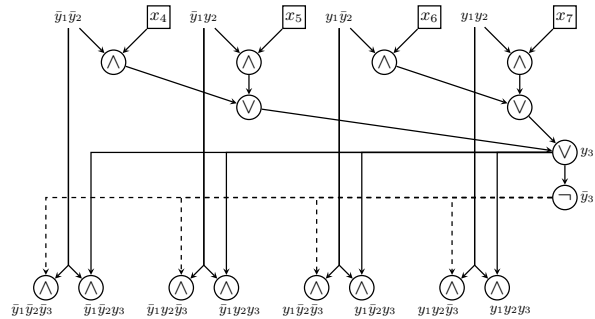
\begin{figure}[ht]
			\centering
            \scalebox{0.6}{\begin{tikzpicture}[->,thick,>=stealth]
    \node (y0) at (-1,0) {$\bar y_1\bar y_2$};
    \sqrgate[x0](1,0)($x_4$);
    \node (y1) at (2,0) {$\bar y_1 y_2$};
    \sqrgate[x1](4,0)($x_5$);
    \node (y2) at (5,0) {$y_1\bar y_2$};
    \sqrgate[x2](7,0)($x_6$);
    \node (y3) at (8,0) {$y_1 y_2$};
    \sqrgate[x3](10,0)($x_7$);

    \foreach \x in {0,1,2,3}{
        \andgate[a\x](\x*3,-1);
        \draw (y\x) -- (a\x);
        \draw (x\x) -- (a\x);
    }
    \orgate[ad0](3,-2); \orgate[ad1](9,-2); 
    \orgate[add](10,-3);
    \draw (a0) -- (ad0); \draw (a1) -- (ad0);
    \draw (a2) -- (ad1); \draw (a3) -- (ad1);
    \draw (ad0) -- (add); \draw (ad1) -- (add);
    \neggate[neg](10,-4); \draw (add) -- (neg); 
    \node at (10.5,-3) {$y_3$};
    \node at (10.5,-4) {$\bar y_3$};

    \foreach \z [evaluate=\z as \x using 3*\z-0.5] in {0,1,2,3}{
        \andgate[u\z](\x,-6); \andgate[v\z](\x-1,-6); 
        \draw (add) -| (u\z); \draw[dashed] (neg) -| (v\z);
    }
    \foreach \x in {0,1,2,3}{
        \draw (y\x) -- ++(0,-5.5) -- (u\x);
        \draw (y\x) -- ++(0,-5.5) -- (v\x);
    }
    \footnotesize
    \node at (-1.8,-6.5) {$\bar y_1\bar y_2\bar y_3$};
    \node at (-0.2,-6.5) {$\bar y_1\bar y_2 y_3$};
    \node at (1.2,-6.5) {$\bar y_1 y_2\bar y_3$};
    \node at (2.8,-6.5) {$\bar y_1 y_2 y_3$};
    \node at (4.2,-6.5) {$y_1\bar y_2\bar y_3$};
    \node at (5.8,-6.5) {$y_1\bar y_2 y_3$};
    \node at (7.2,-6.5) {$y_1 y_2\bar y_3$};
    \node at (8.8,-6.5) {$y_1 y_2 y_3$};
\end{tikzpicture}}
			\caption{Sub-circuit representing second level of the decision tree}
			\label{figu:4}
		\end{figure}
		Notice that for any $1\leq k\leq\D(f)-1$, to compose $y_{k+1}$, an $\OR_{2^k}$ gadget is required in the $k$-th level sub-circuit, which induces a $k$-level of $\lor$-gates. After computing $y_{k+1}$, we also need two additional levels of gates to compute $\bar y_{k+1}$ and $\prod_{i=1}^{k+1} z_i,z_i\in\{y_i,\bar y_i\}$. In order to compute $y_{\D(f)+1}$, an $\OR_{2^{D(f)}}$ gadget is required, which brings a $\D(f)$-level sub-circuit of $\lor$-gates. Thus summing up all sub-circuits, the circuit depth is $\sum_{i=1}^{\D(f)-1}(i+2)+\D(f)=\frac{1}{2}\D(f)^2+O(\D(f))$.
		
		For any fixed $k,1\le k\le \D(f)$, only one of all $2^k$ cases in $\prod_{i=1}^k z_i,z_i\in\{y_i,\bar y_i\}$ is true, thus each level of the circuit provides at most one activated gate. Then the whole energy complexity is $\frac{1}{2}\D(f)^2+O(\D(f))$.
	\end{proof}
	
	%\cite{hatami2010variations} surveys that $\D(f)=\min\{O(\bs(f)^3),O(\deg(f)^3), O(\C(f)^2)\}$. Thus
	
	%\begin{corollary}
	 % $\EC(f)=\min\{O(\bs(f)^6),O(\deg(f)^6), O(\C(f)^4)\}$.
	%\end{corollary}
	
	\section{Lower Bounds of Energy Complexity}
     
    In this section we will give two theorems on lower bounds of energy complexity. The first one relates decision tree complexity to energy complexity by an intricately constructed decision tree with respect to a given circuit. The second one provides a lower bound depending on the number of variables. In the meantime, we will offer cases where these bounds are tight.
     
	\begin{lemma}\label{lemm:4}
		If $\mathcal{C}$ is a monotone circuit depending on $m$ input gates, then $\EC(\mathcal{C})\ge m-1$.
	\end{lemma}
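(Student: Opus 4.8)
The plan is to evaluate $\mathcal{C}$ on the all-ones input $\mathbf{1}=(1,\dots,1)$ and to combine two observations: under $\mathbf{1}$ \emph{every} inner gate is activated, and a monotone circuit whose output depends on $m$ variables must contain at least $m-1$ inner gates. Together these give $\EC(\mathcal{C})\ge\EC(\mathcal{C},\mathbf{1})=\#\{\text{inner gates}\}\ge m-1$, which is exactly the claim.

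First I would establish the activation claim. Since $\mathcal{C}$ is monotone it uses only $\vee_2$- and $\wedge_2$-gates. Processing the gates in topological order under $\mathbf{1}$: all input gates output $1$; a $\wedge$-gate with both inputs equal to $1$ outputs $1$, and a $\vee$-gate with at least one input equal to $1$ outputs $1$. By induction on this order every gate of $\mathcal{C}$ outputs $1$ under $\mathbf{1}$, so $\EC(\mathcal{C},\mathbf{1})$ equals the total number of inner gates of $\mathcal{C}$. This is the step that converts the energy lower bound into a purely combinatorial gate-count lower bound.

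It then remains to show that a monotone circuit whose output depends on $m$ distinct input variables has at least $m-1$ inner gates, which I would prove by a connectivity/edge-counting argument. We may assume no inner gate takes the same gate as both of its inputs, since removing such a redundant gate leaves the computed function unchanged and only decreases the inner-gate count, so the bound for the reduced circuit implies it for the original; then every inner gate has in-degree exactly $2$ into two distinct predecessors. Let $K$ be the connected component of the underlying undirected circuit graph containing the output gate. Because the function depends on each of the $m$ relevant variables, each corresponding input gate has a directed path to the output and hence lies in $K$, so $K$ contains at least $m$ input-gate vertices together with some number $t$ of inner-gate vertices. A connected graph has at least (number of vertices) $-1$ edges, so $K$ has at least $m+t-1$ edges; on the other hand every edge of the circuit is directed into an inner gate and each inner gate has in-degree $2$, so the number of edges inside $K$ is at most $2t$. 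Comparing, $m+t-1\le 2t$, i.e. $t\ge m-1$, and the circuit therefore has at least $t\ge m-1$ inner gates.

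The conceptual heart is the monotonicity observation in the second paragraph, which is short but crucial: it is what lets a single well-chosen input witness the whole circuit at once. The step demanding the most care is the gate-count bound, because $\mathcal{C}$ is a DAG with arbitrary fan-out and gate sharing, so the naive "a binary tree combining $m$ leaves needs $m-1$ internal nodes" reasoning does not apply directly; the connectivity/edge-counting estimate above is what handles sharing cleanly, and the only subtlety is the harmless reduction that removes gates with a repeated input so that each inner gate genuinely contributes two edges to the count.
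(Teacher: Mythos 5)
Your proof is correct and takes essentially the same route as the paper: evaluate $\mathcal{C}$ on the all-ones input, where monotonicity activates every inner gate, and separately lower-bound the number of inner gates by $m-1$ using fan-in $2$. The paper dispatches the gate-count step in one line (the output gate covers all $m$ relevant input gates, and fan-in is at most $2$), while your connectivity/edge-counting argument is just a careful, fully detailed proof of that same assertion, correctly handling fan-out sharing along the way.
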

	\begin{proof}
	    Assume the total number of input gates in $\mathcal C$ is $n$.
	    Since $\mathcal C$ depends on $m$ input gates, the output gate of $\mathcal C$ must cover these gates. Also, the fan-in of $\lor_2,\land_2,\neg$ is at most $2$, thus there are at least $m-1$ inner gates in $\mathcal{C}$. Since $\mathcal{C}$ is monotone, when fed with $1^n$, all inner gates will be activated; therefore $\EC(\mathcal{C})\ge \EC(\mathcal{C},1^n)\ge m-1$.
	\end{proof}
	
	The next lemma is Proposition $2.2$ in \cite{dinesh2018new}. (Also notice Remark \ref{remark1} in Section 2.2.)%For completeness, we give a short proof here.
	
	\begin{lemma}\label{lemm:5}
		If $\mathcal{C}$ is a circuit with $k$ $\neg$-gates, then $\EC(\mathcal{C})\ge k$.
	\end{lemma}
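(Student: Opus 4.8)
The plan is to exhibit a single input under which at least $k$ distinct inner gates are activated; this suffices since $\EC(\mathcal{C})=\max_x\EC(\mathcal{C},x)$. The starting observation is that a $\neg$-gate and its incoming gate are complementary: for each $\neg$-gate $g^{\neg}$ with incoming gate $u$, under any input exactly one of $g^{\neg}$ and $u$ outputs $1$, because $g^{\neg}$ computes $\neg u$. So every $\neg$-gate comes paired with a gate that ``shares its activation,'' and I want to turn these $k$ pairs into $k$ distinct activated inner gates.

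By Remark \ref{remark1} I may assume every $\neg$-gate has a non-$\neg$ incoming gate and that no two $\neg$-gates share an incoming gate. Thus the $k$ $\neg$-gates $g^{\neg}_1,\dots,g^{\neg}_k$ have pairwise distinct incoming gates $u_1,\dots,u_k$, each of which is either an input gate or an $\land/\lor$-gate. I would split the $\neg$-gates into the set $A$, those whose incoming gate is an input gate, and the set $B$, those whose incoming gate is an inner gate, so that $|A|+|B|=k$.

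For a $\neg$-gate in $B$, the pair $\{g^{\neg}_i,u_i\}$ consists of two distinct inner gates, exactly one of which is activated under every input; moreover all the gates arising this way are pairwise distinct inner gates, since the $u_i$ are distinct and non-$\neg$ and therefore coincide neither with another incoming gate nor with any $\neg$-gate. Hence the $B$-pairs always contribute exactly $|B|$ activated inner gates, independent of the input. For a $\neg$-gate in $A$, its incoming gate is a single variable $x_j$ (distinct across $A$ by Remark \ref{remark1}), and $g^{\neg}_i$ is activated exactly when $x_j=0$. The idea is then to choose the input $x$ that sets every such variable to $0$; this forces all $|A|$ of the $A$-negation gates to fire while leaving the $B$-count untouched.

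Under this input the activated inner gates include all $|A|$ negation gates from $A$ together with exactly $|B|$ gates coming from the $B$-pairs, and all of these are pairwise distinct (the $A$-negations are $\neg$-gates distinct from the $B$-negations and distinct from the non-$\neg$ incoming gates in $B$). Therefore $\EC(\mathcal{C},x)\ge |A|+|B|=k$, which yields $\EC(\mathcal{C})\ge k$. I expect the main obstacle to be the bookkeeping around $\neg$-gates fed directly by input gates: those incoming gates do not count toward energy complexity, so the naive ``one activated gate per pair'' count could silently lose them. The fix is exactly to force all such negation gates to fire by the above choice of input, and it is here that Remark \ref{remark1}'s guarantee that distinct $\neg$-gates have distinct incoming gates becomes essential, since it lets me set the relevant input bits to $0$ independently without conflict.
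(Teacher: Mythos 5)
Your proof is correct, and it is worth noting that the paper itself offers no argument for this lemma at all: it simply cites Proposition~2.2 of \cite{dinesh2018new} together with a pointer to Remark~\ref{remark1}. Your write-up therefore supplies the details the paper delegates to a citation, and it does so along the standard lines: pair each $\neg$-gate with its (non-$\neg$, pairwise distinct) incoming gate; each pair whose incoming gate is an inner $\land/\lor$-gate contributes exactly one activated \emph{inner} gate under every input, and the $\neg$-gates fed directly by variables are all forced to fire simultaneously by zeroing those variables, so some single input witnesses $|A|+|B|=k$ activated inner gates. One small correction to your closing remark about where Remark~\ref{remark1} is doing work. The truly indispensable clause is that no $\neg$-gate feeds another $\neg$-gate: without it the lemma is simply false, since the chain computing $\neg\neg x_1$ has two $\neg$-gates but energy complexity $1$ (under every input exactly one of the two gates outputs $1$). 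The ``no two $\neg$-gates share an incoming gate'' clause, by contrast, is what makes your $B$-pairs pairwise disjoint (two $\neg$-gates negating the same inner gate $u$ would yield overlapping pairs and an undercount when $u=1$); it is \emph{not} needed to set the $A$-variables to $0$ ``without conflict,'' since assigning several bits the value $0$ can never conflict, and even two $A$-negations attached to the same variable would both fire under that assignment. With that bookkeeping point adjusted, the argument is complete and matches the intended proof of the cited proposition.
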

	%\begin{proof}
	%     It suffices to show, when inputted $0^n$, every $\neg$-gate contributes uniquely at least one to energy complexity. Denote $\neg_i$ the $i$-th $\neg$-gate in the following argument.
	%     \begin{itemize}
	%        \item The incoming gate of $\neg_i$ is input gate.
	%            Thus $\neg_i$ is activated immediately.
	%        \item The incoming gate of $\neg_i$ is inner gate.
	%            Thus either $\neg_i$ or its incoming gate is activated.
	%     \end{itemize} 
	%     Since the circuit is free of substructures in left part of Figure \ref{figu:5}, $\EC(\mathcal{C})\ge \EC(\mathcal{C},0^n)\ge k$.
	%\end{proof}
	
	With these preparations, now we can give the main result in this section.
	
	\begin{theorem}\label{theo:7}
		$\EC(f)=\Omega(\sqrt{\D(f)})$ for any Boolean function $f:\{0,1\}^n\to\{0,1\}$, and this lower bound is tight.
	\end{theorem}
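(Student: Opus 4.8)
The plan is to prove the equivalent upper bound $\D(f) = O(\EC(f)^2)$: I would take any circuit $\mathcal C$ computing $f$ with $\EC(\mathcal C) = E$ and convert it into a decision tree of depth $O(E^2)$. Since $\D(f)$ is at most the depth of any decision tree computing $f$, this gives $\sqrt{\D(f)} \le E + O(1)$, i.e.\ $\EC(f) = \Omega(\sqrt{\D(f)})$. The two structural ingredients I would rely on are Lemma~\ref{lemm:5}, which guarantees that the number $k$ of $\neg$-gates in $\mathcal C$ satisfies $k \le E$, and Lemma~\ref{lemm:4}, which says a monotone circuit depending on $m$ inputs has energy at least $m-1$; I also use Remark~\ref{remark1} to normalize the $\neg$-gates.

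I would build the tree adaptively in at most $k+1 \le E+1$ rounds, each round resolving one $\neg$-gate at the cost of at most $E+1$ queries. Fix a topological order of the negation gates $g^{\lnot}_1,\dots,g^{\lnot}_k$. In round $j$ I look at the first not-yet-resolved $\neg$-gate; since all earlier negations are already hardwired to constants, its feeding cone contains no live $\neg$-gate, so the sub-circuit $\mathcal D_j$ computing its input is monotone in the currently free variables. The key claim is that $\mathcal D_j$ depends on at most $E+1$ free variables; I then query all of them, which fixes the input of $g^{\lnot}_j$, hence its output, and I hardwire that output as a constant, passing to a restricted circuit with one fewer $\neg$-gate on a smaller subcube. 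Once all $k$ negations are resolved the circuit is monotone, and one final round of at most $E+1$ queries evaluates it, yielding $f$ on the current subcube. The total depth is at most $(k+1)(E+1) \le (E+1)^2 = O(E^2)$.

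The main obstacle is the key claim that each monotone feeding sub-circuit depends on only $O(E)$ free variables, because Lemma~\ref{lemm:4} measures energy over \emph{all} inputs of a standalone monotone circuit, whereas here I only know $\EC(\mathcal C) \le E$ and I work on a subcube with many variables fixed and several $\neg$-gates hardwired. I would resolve this by substituting the fixed variables and hardwired constants into $\mathcal D_j$ and simplifying (an $\land$-gate with a $0$ input becomes $0$, an $\lor$-gate with a $1$ input becomes $1$, and so on) to get a genuine monotone circuit $\tilde{\mathcal D}_j$ on the free variables that still depends on all $m$ of them. Feeding $\tilde{\mathcal D}_j$ the all-ones input activates every one of its inner gates, so it has at least $m-1$ inner gates by the fan-in argument of Lemma~\ref{lemm:4}; but these are genuine gates of $\mathcal C$ simultaneously activated at the corresponding full input, so their number is at most $\EC(\mathcal C) \le E$, giving $m \le E+1$. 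The remaining points are routine: the resolved negation values are constant on the subcube because they depend only on already-queried variables, so hardwiring preserves correctness, and restricting inputs never increases energy.

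Finally, for tightness I would invoke Proposition~\ref{prop-or}: since $\D(\OR_n) = n$ while $\EC(\OR_n) = \Theta(\sqrt n)$, the relation $\EC = \Theta(\sqrt{\D})$ is actually attained, so the $\Omega(\sqrt{\D(f)})$ bound cannot be improved in general.
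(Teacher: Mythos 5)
Your proof is correct, and it rests on exactly the same two pillars as the paper's proof --- the topological ordering of the $\neg$-gates together with Lemma~\ref{lemm:5} to bound their number by $E$, and Lemma~\ref{lemm:4} applied to the monotone cone of each $\neg$-gate --- but you organize the argument in the opposite direction. The paper argues by contradiction: assuming $\EC(\mathcal C)=o(\sqrt{\D(f)})$, it builds a full decision tree querying variables in cone order $S_1,S_2,\dots$, simplifies it, takes a longest path $\mathcal P$ with $|S_{\mathcal P}|\ge \D(f)$, shows each $|S_i\cap S_{\mathcal P}|=o(\sqrt{\D(f)})$ via Lemma~\ref{lemm:4}, restricts round by round, and derives a contradiction from the $\D(f)-m\cdot o(\sqrt{\D(f)})$ accounting. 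You instead prove the contrapositive constructively, resolving one $\neg$-gate per round and showing each simplified monotone cone depends on at most $E+1$ free variables, which yields the explicit inequality $\D(f)\le (E+1)^2$, i.e.\ $\D(f)\le(\EC(f)+1)^2$, rather than an asymptotic statement. Two things your version buys: the adaptive $(k+1)(E+1)$-depth tree makes the quantitative dependence transparent with no longest-path or $o(\cdot)$ bookkeeping, and your constant-propagation step (rewiring away hardwired constants so that the all-ones assignment to the \emph{free} variables activates every surviving gate, which is a genuine gate of $\mathcal C$ under an input lying in the current subcube) makes rigorous a point the paper handles only implicitly when it applies Lemma~\ref{lemm:4} to restricted circuits, since that lemma as stated concerns a standalone monotone circuit fed $1^n$. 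Your tightness argument via Proposition~\ref{prop-or} and $\D(\OR_n)=n$ is the same as the paper's.
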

	\begin{proof}
		For any Boolean function $f:\{0,1\}^n\to\{0,1\}$ and any circuit $\mathcal{C}$ computing $f$, suppose $\EC(\mathcal{C})=o(\sqrt{\D(f)})$ and let $m$ be the number of $\neg$-gates in $\mathcal{C}$, then $m=o(\sqrt{\D(f)})$ by Lemma \ref{lemm:5}. List all the $\neg$-gates with topological order $\neg_1,\neg_2,\dots,\neg_m$
		such that for any $1\le i<j\le m$, $\neg_i$ does not cover $\neg_j$.
		Suppose the set of all variables (input gates) covered by $\neg_i$ is $\widetilde S_i$, then $S_i$ is defined as $\widetilde S_i\backslash\left(\bigcup_{j=1}^{i-1}\widetilde S_j\right)$. (See also the left side of Fig. \ref{figu:6}.) Define $S_{m+1}=[n]\backslash\left(\bigcup_{j=1}^m\widetilde S_j\right)$. Also let $k_i$ be the number of elements of $S_i$; thus $S_i=\{x_{i,j}\mid j\in[k_i]\}$. Notice that the set collection $S_1, S_2, \dots, S_{m+1}$ is a division of all variables.\par 
        Consider the query algorithm by the order $x_{i,j}$, where $x_{i,j}$ precedes $x_{i',j'}$ if and only if $(i<i')\vee(i=i'\wedge j<j')$. This algorithm induces a decision tree $\mathcal T'$ with depth $n$ immediately. (See also the middle part of Fig. \ref{figu:6}.) 
          	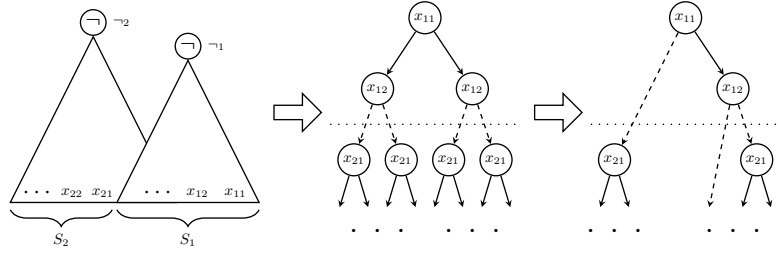
\begin{figure}[ht]
			\centering
            \scalebox{0.9}{\scalebox{0.7}{
\begin{tikzpicture}[thick,>=stealth,->,
        tri/.style={draw=black, shape=isosceles triangle, minimum height=3cm, minimum width=3cm,
        shape border rotate=+90,
        isosceles triangle stretches}]

    \node[tri, minimum height=3.5cm, minimum width=3.5cm] at (-2,0.155) {};
    \node[tri, fill=white] at (0,0) {};
    \node at (1,-0.7) {$x_{11}$};
    \node at (0.2,-0.7) {$x_{12}$};
    \node at (-0.6,-0.7) {\Large $\cdots$};
    \node at (-1.8,-0.7) {$x_{21}$};
    \node at (-2.45,-0.7) {$x_{22}$};
    \node at (-3.15,-0.7) {\Large $\cdots$};

    \draw [decorate,decoration={brace,amplitude=10pt,mirror,raise=12pt},yshift=-18pt,-] 
        (-1.5,0) -- (1.5,0) node [black,midway,yshift=-30pt] {$S_1$};
    \draw [decorate,decoration={brace,amplitude=10pt,mirror,raise=12pt},yshift=-18pt,-] 
        (-3.75,0) -- (-1.6,0) node [black,midway,yshift=-30pt] {$S_2$};

    \neggate[neg1](0,2.4); 
    \node at (0.6,2.4) {$\neg_1$};
    \neggate[neg2](-2,2.9);
    \node at (-1.4,2.9) {$\neg_2$};

    \node[draw, single arrow, minimum height=10mm, minimum width=8mm, single arrow head extend=2mm, anchor=west] at (1.8,1) {};

    \node[circle,draw=black,inner sep=2pt] (n1) at (5,3) {$x_{11}$};
    \node[circle,draw=black,inner sep=2pt] (n21) at (4,1.5) {$x_{12}$};
    \node[circle,draw=black,inner sep=2pt] (n22) at (6,1.5) {$x_{12}$};
    \node[circle,draw=black,inner sep=2pt] (n31) at (3.5,0) {$x_{21}$};
    \node[circle,draw=black,inner sep=2pt] (n32) at (4.5,0) {$x_{21}$};
    \node[circle,draw=black,inner sep=2pt] (n33) at (5.5,0) {$x_{21}$};
    \node[circle,draw=black,inner sep=2pt] (n34) at (6.5,0) {$x_{21}$};
    
    \draw (n1) to (n21);
    \draw (n1) to (n22);
    \draw[dashed] (n21) to (n31);
    \draw[dashed] (n21) to (n32);
    \draw[dashed] (n22) to (n33);
    \draw[dashed] (n22) to (n34);
    \draw[-,loosely dotted] (3,0.75) to (7,0.75);
    \foreach \x in {1,2,3,4}{
        \draw (n3\x) to ++(-0.3,-1);
        \draw (n3\x) to ++(0.3,-1);
    }

    \foreach \x in {3.5,4,4.5,5.5,6,6.5}{
        \node[scale=2]  at (\x,-1.5) {$\cdot$};
    }

    \node[draw, single arrow, minimum height=10mm, minimum width=8mm, single arrow head extend=2mm, anchor=west] at (7.3,1) {};

    \node[circle,draw=black,inner sep=2pt] (n1) at (10.5,3) {$x_{11}$};
    \node[circle,draw=black,inner sep=2pt] (n31) at (9,0) {$x_{21}$};
    \node[circle,draw=black,inner sep=2pt] (n22) at (11.5,1.5) {$x_{12}$};
    \node[circle,draw=black,inner sep=2pt] (n34) at (12,0) {$x_{21}$};

    \draw[dashed] (n1) to (n31);
    \draw (n1) to (n22);
    \draw[dashed] (n22) to (n34);
    \draw[dashed] (n22) to ++(-0.5,-2.5);
    \draw[-,loosely dotted] (8.5,0.75) to (12.5,0.75);
    \foreach \x in {1,4}{
        \draw (n3\x) to ++(-0.3,-1);
        \draw (n3\x) to ++(0.3,-1);
    }

    \foreach \x in {9,8.5,9.5,11.5,11,12}{
        \node[scale=2] at (\x,-1.5) {$\cdot$};
    }

\end{tikzpicture}}}
			\caption{Circuit $\mathcal{C}$, its induced decision tree $\mathcal T'$, and the simplified decision tree $\mathcal T$}
			\label{figu:6}
		\end{figure}\par
		Since $\mathcal T'$ may be redundant, the simplification process goes as follows: From the root to leaves, check each node whether its left sub-tree and right sub-tree are identical. If so, this node must be inconsequential when queried upon. Thus delete this node and its right sub-tree, and connect its parent to its left child. (See also the right side of Fig. \ref{figu:6}.) \par
		After this process, the new decision tree $\mathcal T$ satisfies: 
        \begin{itemize}
           \item In any path from the root to a leaf, if $x_{i,j}$ is queried before $x_{i',j'}$, $i$ is not greater than $i'$.
           \item Any sub-tree of $\mathcal T$ is non-degenerated, i.e., all queried variables are sensitive in the sub-tree.
           \item The depth of $\mathcal T$ is no smaller than $\D(f)$ since $\mathcal T$ is a decision tree of $f$.
        \end{itemize}\par 
        Let the longest path in $\mathcal T$ be ${\mathcal P}$ and $S_{\mathcal P}$ be the set of variables on ${\mathcal P}$; thus $|S_{\mathcal P}|\ge \D(f)$. Then choose an input $\hat x$ which matches the value of variables on path ${\mathcal P}$.
        
        Suppose $|S_1\cap S_{\mathcal P}|\ge\Omega(\sqrt{\D(f)})$, then the sub-circuit under $\neg_1$ is a monotone circuit depending on at least $|S_1\cap S_{\mathcal P}|$ input gates. Thus the energy complexity in this sub-circuit is $\Omega(\sqrt{\D(f)})$ by Lemma \ref{lemm:4}, which is a contradiction. 
        Therefore $|S_1\cap S_{\mathcal P}|=o(\sqrt{\D(f)})$. Then set variables in $S_1$ to the same value in $\hat x$. Thus the restricted circuit has fewer $\neg$-gates and computes a restricted $f$ function whose decision tree is a sub-tree of $\mathcal T$ with depth at least $|S_{\mathcal P}|-o(\sqrt{\D(f)})$. 
        Now consider $|S_2\cap S_{\mathcal P}|$ in the restricted circuit and the same analysis follows. Continue this restriction process until the value of all $\neg$-gates are determined. 
        
        By then, the depth of the decision tree is still at least $|S_{\mathcal P}|-m\times o(\sqrt{\D(f)})\geq\D(f)-o(\D(f))$. Thus the remaining monotone circuit depends on at least $\D(f)-o(\D(f))$ input gates. By Lemma \ref{lemm:4}, the energy complexity is at least $\D(f)-o(\D(f))=\Omega(\sqrt{\D(f)})$, which is a contradiction.
        
        The tightness of this lower bound is shown in $\EC(\OR_n)=\Theta(\sqrt{n})$ in Proposition \ref{prop-or} as $\D(\OR_n)=n$.
	\end{proof}
	
	%For symmetric function it is easy to get the following corollary.
	
	%\begin{corollary}\label{coro:1}
	%	For any non-constant symmetric Boolean function $f:\{0,1\}^n\to\{0,1\}$, we have $\EC(f)=\Omega(\sqrt{n})$.
	%\end{corollary}

    %\begin{proof}
    %   It is derived immediately from Theorem \ref{theo:7} as $\D(f)=n$ when $f$ is symmetric and non-constant.
    %\end{proof}
    
    Now let us consider the relationship between energy complexity and the input size.
	
	\begin{theorem}\label{theo:6}
		$\EC(f)=\Omega(\log_2{n})$ for any non-degenerated Boolean function $f:\{0,1\}^n\to\{0,1\}$, and this lower bound is tight.
	\end{theorem}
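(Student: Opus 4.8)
The plan is to prove the contrapositive in quantitative form: if $\EC(\mathcal C)=E$ for a circuit $\mathcal C$ computing $f$, then $f$ can depend on at most $2^{O(E)}$ variables, so that a non-degenerated $f$ on $n$ variables forces $n\le 2^{O(E)}$, i.e. $E=\Omega(\log_2 n)$. First I would apply Lemma~\ref{lemm:5} to bound the number of $\neg$-gates by $k\le E$, list them bottom-up as $\neg_1,\dots,\neg_k$ exactly as in the proof of Theorem~\ref{theo:7} (so that the cone feeding $\neg_i$ contains only $\neg_1,\dots,\neg_{i-1}$), and record for each input $x$ the vector $\beta(x)=(\neg_1(x),\dots,\neg_k(x))\in\{0,1\}^k$. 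Replacing every $\neg$-gate by the constant $\beta_i$ turns $\mathcal C$ into a \emph{monotone} circuit $\mathcal C_\beta$, and on the region $P_\beta=\{x:\beta(x)=\beta\}$ the function $f$ agrees with the monotone function $M_\beta$ computed by $\mathcal C_\beta$. Since $\beta$ takes at most $2^k\le 2^E$ values, $f$ is glued from at most $2^E$ monotone pieces, and the whole argument is about limiting how many variables each piece can contribute.

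Next I would classify each variable that $f$ depends on. If $x_j$ has a sensitive pair $(x,x\oplus e_j)$ with $\beta(x)=\beta(x\oplus e_j)$, then $M_\beta$ is sensitive to $x_j$ inside $P_\beta$; otherwise flipping $x_j$ must flip some $\neg_i$, which forces $x_j$ to feed into $\neg_i$, i.e. $x_j$ is \emph{covered} ($x_j\in\widetilde S_i$). The covered variables $\bigcup_i\widetilde S_i$ I would bound by induction on the number of negations: the sub-circuit feeding $\neg_i$ has energy at most $E$ (its active gates are a subset of those of $\mathcal C$) and uses only $\neg_1,\dots,\neg_{i-1}$, so it depends on at most the same bound evaluated at $i-1$ negations. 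For the first kind of variable, the target is the clean estimate that, for every $\beta$, the monotone $M_\beta$ depends on at most $E+1$ variables, which is exactly what Lemma~\ref{lemm:4} gives for a monotone circuit of energy at most $E$. Summing the two contributions yields a recurrence of the shape $n\le\sum_i(\text{bound at }i-1)+2^{E}(E+1)$, which solves to $n\le 2^{O(E)}$.

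The main obstacle is precisely the step ``$M_\beta$ depends on at most $E+1$ variables.'' The bound $\EC(\mathcal C)\le E$ controls the number of active gates only on \emph{consistent} inputs (those $x$ with $\beta(x)=\beta$), whereas applying Lemma~\ref{lemm:4} wants to feed the monotone circuit $\mathcal C_\beta$ its all-ones input, which is typically inconsistent; moreover $P_\beta$ is an intersection of an up-set and a down-set and so has no largest element on which to read off the energy. I expect the crux to be the choice of a good consistent witness: for the distinguished value $\beta=\beta(1^n)$ the all-ones input is consistent, and any monotone gate of $\mathcal C_\beta$ that is $0$ there is identically $0$, so its non-constant gates number at most $\EC(\mathcal C,1^n)\le E$ and, by fan-in $2$, $M_\beta$ depends on at most $E+1$ variables immediately; the real work is to transport this to every $\beta$, e.g. by taking a maximal element of $P_\beta$ or by joining the individual sensitive witnesses while staying inside $P_\beta$. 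Finally, the tightness of the $\Omega(\log_2 n)$ bound is witnessed by the extended address function: Proposition~\ref{prop-eaddr} gives $\EC(\EADDR_{n,g})=\Theta(n)=\Theta\big(\log_2(n+2^n)\big)$ on a non-degenerated function of $n+2^n$ variables.
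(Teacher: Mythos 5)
Your reduction to monotone pieces is a reasonable program, but it has a genuine gap exactly where you flag it, and the repairs you sketch do not close it. The step ``$M_\beta$ depends on at most $E+1$ variables'' is not merely unproven but false as literally stated: take the energy-$O(\sqrt n)$ circuit for $\OR_n$ from Proposition~\ref{prop-or} and hardwire its $\neg$-gates to their values under the input $0^n$; the resulting monotone circuit computes $\OR_n$ itself, so this $M_\beta$ depends on all $n$ variables while $E=O(\sqrt n)$. Only the restriction of $M_\beta$ to $P_\beta$ (here $P_\beta=\{0^n\}$, on which it is constant) is controlled, so what your argument actually needs is a bound on the number of variables having a sensitive pair \emph{inside} $P_\beta$ --- and Lemma~\ref{lemm:4} cannot deliver it, because its proof counts activated gates at the all-ones input, whereas the bound $\EC(\mathcal C_\beta,x)\le E$ is inherited from $\mathcal C$ only at inputs $x$ consistent with $\beta$, and $1^n\in P_\beta$ for only one value of $\beta$. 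Your two suggested patches fail for identifiable reasons: taking the coordinatewise join of the sensitive witnesses $y^{(j)}$ does activate all witness paths simultaneously (by monotonicity), but the join typically leaves $P_\beta$, so no energy bound applies at it; and $P_\beta$ need not contain a single maximal element dominating all the witnesses (moreover, since the functions feeding $\neg_2,\dots,\neg_k$ are themselves non-monotone, $P_\beta$ is not an intersection of an up-set and a down-set, contrary to your description). Until this per-region lemma is established, the recurrence and the conclusion $n\le 2^{O(E)}$ do not follow. (The outer skeleton is sound: at most $2^{E}$ patterns $\beta$ by Lemma~\ref{lemm:5}, and a variable whose flip changes the value of some $\neg_i$ is one that the sub-circuit feeding $\neg_i$ genuinely depends on, with that sub-circuit's activated gates a subset of those of $\mathcal C$.)

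For comparison, the paper's proof of Theorem~\ref{theo:6} avoids this machinery entirely. Non-degeneracy forces the output gate to cover all $n$ input gates; since fan-in is at most $2$, some input gate $x_i$ lies at distance at least $\log_2 n$ from the output. Choosing $\hat x$ with $f(\hat x)\neq f(\hat x\oplus e_i)$, there is a path $\mathcal P$ of length $\ell\ge\log_2 n$ from $x_i$ to the output along which every gate changes value when $x_i$ is flipped; hence each gate of $\mathcal P$ is activated under exactly one of the two inputs, so $\EC(\mathcal C,\hat x)+\EC(\mathcal C,\hat x\oplus e_i)\ge\ell$, giving $\EC(\mathcal C)\ge\ell/2\ge\frac{1}{2}\log_2 n$. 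Your tightness part (via $\EADDR_{n,g}$ and Proposition~\ref{prop-eaddr}) coincides with the paper's and is fine; but the lower-bound half of your proposal, even if its missing lemma were supplied, would reach the same $\Omega(\log_2 n)$ through a far heavier route than this two-input path-flipping argument.
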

	\begin{proof}
	    Assume $\mathcal{C}$ is an arbitrary circuit computing $f$. It suffices to show $\EC(\mathcal C)\geq\frac12\log_2n$.
	    
		Since $f$ is non-degenerated, the output gate must cover all input gates. Since the fan-in of $\wedge_2,\vee_2,\neg$ gate is no more than $2$, a $k$-depth circuit can cover at most $2^k$ input gates. Thus removing all gates, of which the shortest path to output gate is less than $\log n$ in $\mathcal{C}$, some input gate $x_i$ will be disconnected with the output gate.
		
		Choose an input $\hat x$ satisfying $f(\hat x)=0,f(\hat x\oplus e_i)=1$. Note that when inputted $\hat x$, the output of $\mathcal{C}$ is different from that when inputted $\hat x\oplus e_i$.
		Therefore, there exists a path ${\mathcal P}$ from input gate $x_i$ to output gate under $\hat x$, such that the value of any gate on ${\mathcal P}$ changes after flipping $x_i$. Let $\ell$ be the length of path ${\mathcal P}$, then $\ell\ge\log n$. Also, when inputted $\hat x$ and $\hat x\oplus e_i$, the total number of activated inner gates in ${\mathcal P}$ is $\ell$. It follows immediately
		$$\begin{aligned}
		    \EC(\mathcal{C})
		    &\ge\max\big\{\EC(\mathcal{C},\hat x),\EC(\mathcal{C},\hat x\oplus e_i)\big\}\\
		    &\ge\frac{\big(\EC(\mathcal{C},\hat x)+\EC(\mathcal{C},\hat x\oplus e_i)\big)}{2}
		    \geq\frac{\ell}{2}\ge\frac{\log n}{2}.
		\end{aligned}$$
		
		The tightness of this lower bound is shown by $\EC(\EADDR_{n,g})=\Theta(n)$ in Proposition \ref{prop-eaddr} as $\EADDR_{n,g}:\{0,1\}^{n+2^n}\to\{0,1\}$ is non-degenerated.
	\end{proof}
		
	\section{Conclusion and Open Problems}
    
    Throughout this paper, we build polynomial relationship between energy complexity and other well-known measures of Boolean functions. More precisely, we prove that $\EC(f)\leq \min\{\frac{1}{2}\D(f)^2+O(\D(f)),n+2\D(f)-2\}$ and $\EC(f)=\Omega(\sqrt{\D(f)})$, as well as a logarithmic lower bound in terms of the input size for all non-degenerated function. In addition, we show the tightness of lower bounds by examining $\OR$ functions and $\EADDR$ functions. 
	
    Despite all the effort, some fascinating problems still remain open. 
	\begin{enumerate}
	\item Given two Boolean functions $f$ and $g:\{0,1\}^n\to\{0,1\}$, we say $f$ and $g$ are \textit{isomorphic} if there exists a subset $S\subseteq[n]$ such that $\forall~x\in\{0,1\}^n$,   $f(x)=g(x\bigoplus_{i\in S} e_i)$. Moreover, we denote that $f$ and $g$ are \textit{co-isomorphic} if there exists a subset $S\subseteq[n]$ such that $\forall~x\in\{0,1\}^n$,   $f(x)=1-g(x\bigoplus_{i\in S} e_i)$. Two isomorphic/co-isomorphic Boolean functions have same complexity for most of the Boolean function measures including decision tree complexity, certificate, sensitivity, degree, etc. However, this is not true for energy complexity. For example, $\AND_n$ and $\OR_n$ are co-isomorphic, but there is a quadratic separation between their energy complexity: $\EC(\AND_n)=\Theta(n)$ and $\EC(\OR_n)=\Theta(\sqrt{n})$. What is the largest gap between two co-isomorphic Boolean functions? 
	\item 
	Is the upper bound $\EC(f)=O(\D(f)^2)$ tight?
	We believe this bounds can not be improved. One candidate is the following class of Boolean functions $\mathcal H_n$: 
	a function $h\in\mathcal H_n$ has a non-degenerated depth-$n$ decision tree with $2^n-1$ differently labelled internal nodes, i.e., $h$ is a Boolean function depending on $2^n-1$ variables. Each leaf is assigned to be $0$ or $1$ arbitrarily. Using the similar trick in Proposition \ref{prop-eaddr}, we can show $\forall~ h,g\in\mathcal H_n,\EC(h)\leq 2\EC(g)+2$.  
	The reason why we consider $\mathcal H_n$ is that
	for any Boolean function $f$ with $\D(f)\le n$, there exists an $h\in\mathcal H_n$ such that any circuit $\mathcal{C}$ computing $h$ can be modified into $\mathcal{C}'$, which in turn computes $f$ with $\EC(\mathcal C')\leq\EC(\mathcal{C})$, by rewiring input gates. Therefore, we believe $\forall~h\in\mathcal H_n, \EC(h)=\Theta(n^2)$.
	\end{enumerate}
	
	\section*{Acknowledgement}
	We thank Krishnamoorthy Dinesh for answering some questions with \cite{dinesh2018new}.

%
% ---- Bibliography ----
%
% BibTeX users should specify bibliography style 'splncs04'.
% References will then be sorted and formatted in the correct style.
%
\bibliographystyle{splncs04}
\bibliography{reference}

\section*{Appendix}
\appendix
\section{Tight Bounds of Energy Complexity on Specific Functions}
	
	In this section, we discuss two specific function classes, i.e., $\tt EXTENDED$ $\ADDRESS$ function and $\OR_n$ function, on both of which we obtain tight bound of energy complexity. More precisely, $\OR_n$ function shows that the lower bound in Theorem $\ref{theo:7}$ is tight, and $\tt EXTENDED$ $\ADDRESS$ function corresponds with the lower bound in Theorem $\ref{theo:6}$.
	
	\subsection{$\OR_n$ Function}
	
	In this subsection we discuss the energy complexity of the $\OR_n$ function and finish the proof of tightness part in Theorem $\ref{theo:7}$. Given integer $n$, $\OR_n:\{0,1\}^{n}\to\{0,1\}$ is defined as 
	$$
	\OR_n(x_1,x_2,\dots,x_n)=x_1\vee x_2\vee\dots\vee x_n.
	$$

	\begin{nonumtheorem*}{Proposition \ref{prop-or} restated}
	   For any positive integer $n$, $\EC(\OR_n) = \Theta(\sqrt{n})$.
	\end{nonumtheorem*}
	\begin{proof}
		The lower bound follows from $\D(\OR_n)=n$ and Theorem \ref{thm-lowerbounds1}. In order to prove $\EC(\OR_n)=O(\sqrt{n})$, a circuit is constructed as follows (see also Fig. \ref{figu:1}):
		\begin{enumerate}
		    \item Divide all $n$ variables into $\sqrt{n}$ blocks, each block contains $\sqrt{n}$ variables. For variables in the first block, use $\sqrt{n}-1$ $\lor$-gates to connect them as an $\OR_{\sqrt{n}}$ function and mark the output gate of the sub-circuit as $g_1$. 
		    \item Add a $\neg$-gate $h_1$ linked from $g_1$; and for each variable in the second block, feed it into a $\land$-gate together with $h_1$. Then use $\sqrt{n}-1$ $\lor$-gates to connect these $\sqrt n$ $\land$-gates and mark the output gate of the sub-circuit as $g'_2$. 
		    \item Add a $\lor$-gate which has incoming gates $g_1$ and $g'_2$; and insert a $\neg$-gate $h_2$ linked from $g_2$. For each variable in the second block, connect it with $h_2$ by a $\land$-gate. Then use $\sqrt{n}-1$ $\lor$-gates to connect these $\sqrt n$ $\land$-gates. 
		    \item Repeat this process until all blocks are constructed. Then $g_{\sqrt{n}}$ shall be the output gate of the whole circuit.
	    \end{enumerate}
		\begin{figure}[ht]
			\centering
            \scalebox{0.7}{\begin{tikzpicture}[->,thick,>=stealth,
        smallOr/.style={draw=black, shape=isosceles triangle, minimum height=1.5cm, minimum width=4cm,
                        shape border rotate=+90,
                        isosceles triangle stretches}]
    \node at (0,-0.1) {$OR_{\sqrt n}$};
    \node[smallOr] (or1) at (0,0) {};
    \node at (-0.3,0.85) {$g_1$};
    
    \sqrgate[x1](-1.5,-1.2)($x_1$); \draw[-] (x1) -- ++(0,0.55);
    \sqrgate[x2](-0.5,-1.2)($x_2$); \draw[-] (x2) -- ++(0,0.55);
    \foreach \x in {0.00,0.25,...,14.00}{
        \node at (\x,-1.2) {\large $\cdot$};
    }
    \foreach \x in {4.00,4.25,...,5.00}{
        \node at (\x,1.8) {\large $\cdot$};
    }

    \node at (4,2.9) {$OR_{\sqrt n}$};
    \node[smallOr] (or2) at (4,3) {}; \node at (4.4,4) {$g_2'$};
    \andgate[and1](2.5,1.8); \draw[dashed] (2.5,-1) -- (and1); \draw[-] (and1) -- ++(0,0.55);
    \andgate[and2](3.5,1.8); \draw[dashed] (3.5,-1) -- (and2); \draw[-] (and2) -- ++(0,0.55);
    \andgate[andn](5.5,1.8); \draw[dashed] (5.5,-1) -- (andn); \draw[-] (andn) -- ++(0,0.55);
    \neggate[neg1](1,1.2); \node at (1,1.7) {$h_1$};
    \orgate[sor](3,4.7); \node at (2.6,5.1) {$g_2$};
    \neggate[neg2](5,4.7); \node at (5,5.2) {$h_2$};
    \draw (or1) |- (neg1); 
    \draw (or1) |- (sor);
    \draw[-] (neg1) -- (5,1.2);
    \draw[<-] (and1) -- ++(-0.5,-0.6);
    \draw[<-] (and2) -- ++(-0.5,-0.6);
    \draw[<-] (andn) -- ++(-0.5,-0.6);
    \draw (or2) |- ++(0,1.1) -| (sor);
    \draw (sor) -- (neg2);

    \foreach \x in {4.00,4.25,...,5.00}{
        \node at (\x+4,5.3) {\large $\cdot$};
    }

    \node at (8,6.4) {$OR_{\sqrt n}$};
    \node[smallOr] (or3) at (8,6.5) {};
    \andgate[and1](6.5,5.3); \draw[dashed] (6.5,-1) -- (and1); \draw[-] (and1) -- ++(0,0.55);
    \andgate[and2](7.5,5.3); \draw[dashed] (7.5,-1) -- (and2); \draw[-] (and2) -- ++(0,0.55);
    \andgate[andn](9.5,5.3); \draw[dashed] (9.5,-1) -- (andn); \draw[-] (andn) -- ++(0,0.55);
    \draw[-] (neg2) -- (9,4.7);
    \orgate[sor2](7,8.1); \node at (6.6,8.5) {$g_3$}; \node at (8.3,7.5) {$g_3'$};
    \neggate[neg2](9,8.1); \node at (9,8.6) {$h_3$};
    \draw[<-] (and1) -- ++(-0.5,-0.6);
    \draw[<-] (and2) -- ++(-0.5,-0.6);
    \draw[<-] (andn) -- ++(-0.5,-0.6);
    \draw (or3) |- ++(0,1.1) -| (sor2);
    \draw (sor2) -- (neg2);
    \draw (sor) |- (sor2);
    \draw (sor2) -- ++(0,1);
    \draw (neg2) -- ++(1,0);

    \foreach \x in {8.0,8.5,...,11.0}{
        \node at (\x-0.5,\x+1.5) {\Huge $\cdot$};
    }

    \node at (13,10.9) {$OR_{\sqrt n}$};
    \node[smallOr] (orn) at (13,11) {};
    \orgate[last](12,12.6); \node at (12.6,12.6) {$g_{\sqrt n}$};
    \draw (orn) |- ++(0,1.1) -| (last); \node at (13.5,11.9) {$g_{\sqrt n}'$};
    \draw[<-] (last) -- ++(-1,0); 
    \draw (last) -- node[above,yshift=3mm] {output} ++(0,1);

    \sqrgate[xn](14.5,-1.2)($x_n$); 
    \draw[-,dashed] (xn) -- ++(0,11.5);
\end{tikzpicture}}
			\caption{$\OR_n$ circuit}
			\label{figu:1}
		\end{figure}
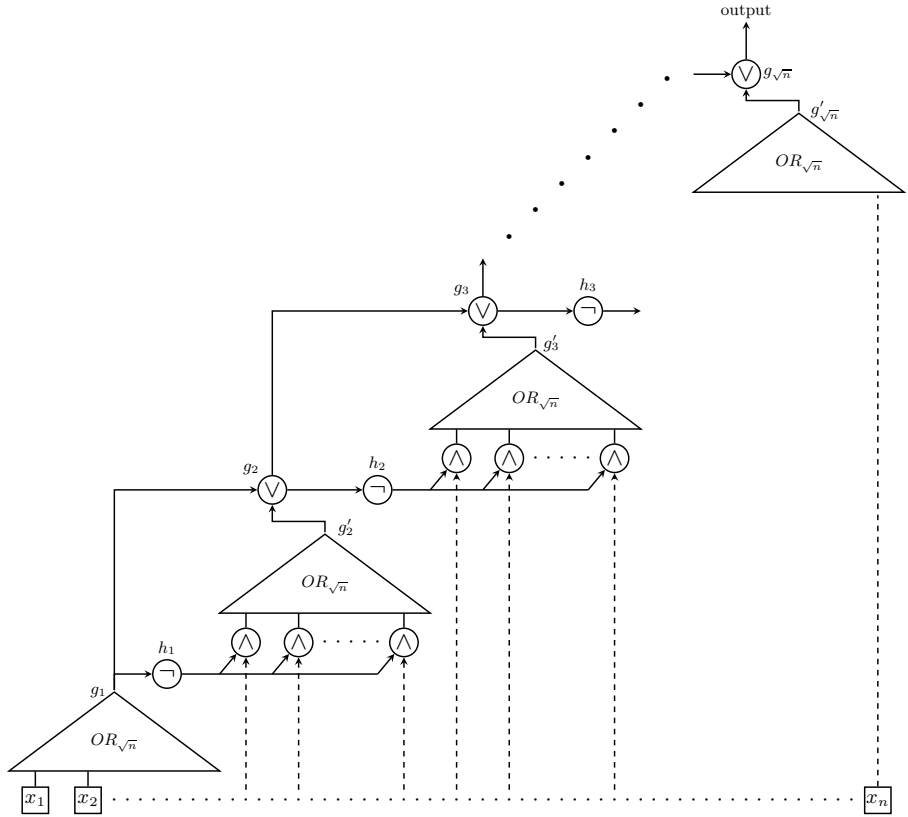\par 
		The main idea of the construction is to view each block as a switch so that if it has an activated gate then it can ``switch of'' all blocks behind it with low cost.
		
		Consider a specific input $x$. If $x=0^n$, then the activated gates are $h_i$'s, whose number is $\sqrt{n}$.
		Otherwise if $x\neq 0^n$, then at least one bit is $1$. Suppose all variables in the first $k-1$ blocks are $0$ and in the $k$-th block there exists a value-$1$ input bit. 
		Then in the first $k-1$ blocks, only $\neg$-gates $h_1, \dots, h_{k-1}$ are activated. 
		And in the $k$-th $\OR_{\sqrt n}$ sub-circuit, at most $\sqrt n-1$ gates are activated. 
		Thus $g_i,i\ge k$ is activated, indicating $h_i,i\ge k$ is deactivated. Therefore, all variables in blocks after $k$-th block are ``switched off''. To sum up, all the activated gates are among $g_i$'s, $g'_i$'s, $h_i$'s, and $k$-th $\OR_{\sqrt n}$ gadget.
		So the energy complexity is $\Theta(\sqrt{n})$.
	\end{proof}
	
	\begin{remark}\label{rema:1}
		\cite{dinesh2018new} showed that $\EC(\AND_n)=\Theta(n)$, so it is rather intriguing that the energy complexity of $\AND_n$ and $\OR_n$ are different while they are basically same under other Boolean function measures such as decision tree complexity, sensitivity, block sensitivity, etc.
	\end{remark}

	\subsection{Extended Address Function}
	
	In this subsection, we discuss the $\tt EXTENDED$ $\ADDRESS$ function, which is defined in Definition \ref{defi:3}, thus complete the proof of Theorem \ref{theo:6}. Note that although $\ADDRESS$ function in itself verifies Theorem \ref{theo:6}, the low-energy circuit for $\ADDR$ actually gives rise to tight bounds of the more generalized $\EADDR_{n,g}$ function.

	\begin{lemma}\label{lemm:addr} For any positive integer $n$, $\EC(\EADDR_{n,g})=\Theta(n)$.
	\end{lemma}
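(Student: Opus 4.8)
The plan is to establish $\EC(\EADDR_{n,g}) = \Omega(n)$ and $\EC(\EADDR_{n,g}) = O(n)$ separately. For the lower bound I would first check that $\EADDR_{n,g}$ depends on all $N := n + 2^n$ of its input variables, i.e.\ that it is non-degenerated. Every data bit $y_i$ is sensitive on any input whose address block $x_1\cdots x_n$ equals $i$, since flipping $y_i$ flips the output regardless of the value $g(i)$. Every address bit $x_j$ is sensitive as well: picking the two addresses $i,i'$ that differ only in coordinate $j$ and setting $y_i = \overline{g(i)}$ and $y_{i'} = g(i')$ forces the output to change when $x_j$ is toggled. Once non-degeneracy is in hand, Theorem \ref{theo:6} applies directly and gives $\EC(\EADDR_{n,g}) = \Omega(\log_2 N)$; since $n + 2^n \ge 2^n$ we have $\log_2 N \ge n$, yielding the lower bound $\Omega(n)$.

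For the upper bound I would build an explicit multiplexer-style circuit and argue its energy is $O(n)$. The heart is a decoder that, from the $n$ address bits, computes the one-hot vector $d_0,\dots,d_{2^n-1}$ with $d_i = 1$ iff the address equals $i$; I would lay it out as a complete binary tree of $\wedge$-gates over the $2n$ literals $x_1,\overline{x_1},\dots,x_n,\overline{x_n}$, so that the node at level $k$ indexed by a string $s\in\{0,1\}^k$ outputs $\bigwedge_{j\le k}\ell_j$ with each $\ell_j\in\{x_j,\overline{x_j}\}$ chosen by $s$. The key observation is that on any fixed input exactly one node per level evaluates to $1$ (the prefix of the true address), so the decoder fires only $O(n)$ gates while using just the $n$ negations $\overline{x_1},\dots,\overline{x_n}$. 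Selecting the datum via $Y := \bigvee_i (y_i \wedge d_i)$ contributes at most one activated $\wedge$-gate together with an $\OR$-tree in which at most one leaf is $1$, hence $O(n)$ activated gates. To incorporate $g$, I would compute $G := \bigvee_{i:\,g(i)=1} d_i$ by a second $\OR$-tree over the same decoder (again at most one true leaf, so $O(n)$ energy); note $G = g(x_1,\dots,x_n)$. The output gate computes $(Y\wedge G)\vee(\overline{Y}\wedge\overline{G})$, which equals $y_{x_1\cdots x_n}$ when $g=1$ and $\overline{y}_{x_1\cdots x_n}$ when $g=0$, matching Definition \ref{defi:3}; this final stage adds only $O(1)$ gates and two negations. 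Summing, the circuit fires $O(n)$ gates on every input, so $\EC(\EADDR_{n,g}) = O(n)$, and combined with the lower bound this gives $\Theta(n)$.

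The main obstacle is controlling the energy despite the $2^n$ data variables. A black-box appeal to Theorem \ref{thm-upperbounds2} introduces one negation per input variable and hence gives only the useless bound $O(N) = O(2^n)$, while the tempting shortcut of pre-negating each $y_i$ with $g(i)=0$ would plant up to $2^n$ negation gates, which by Lemma \ref{lemm:5} alone forces energy $\Omega(2^n)$. The construction above sidesteps both traps by using exactly $n+2$ negations and by exploiting the one-hot property of the decoder, which guarantees that every $\OR$-tree in the circuit has at most one true leaf and therefore lights up only a single root-to-leaf path. Verifying this single-path invariant level by level, and confirming that the final stage faithfully reproduces the XNOR behavior of Definition \ref{defi:3}, is the part that needs the most care.
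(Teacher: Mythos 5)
Your proposal is correct, and its upper-bound half takes a genuinely different route from the paper. For the lower bound you and the paper agree: both invoke Theorem~\ref{theo:6} on the $N=n+2^n$ variables; your explicit verification that $\EADDR_{n,g}$ is non-degenerated (sensitivity of each $y_i$ at its own address, and of each $x_j$ via the setting $y_i=\overline{g(i)}$, $y_{i'}=g(i')$) is a hypothesis the paper leaves implicit, so including it is sound diligence. For the upper bound, the paper does \emph{not} build a circuit from scratch: it takes the natural depth-$(n+1)$ decision tree of $\ADDR_n$, runs it through the construction of Theorem~\ref{thm-upperbounds2}, and then observes that the $2^n$ negation gates $g^{\lnot}_{y_i}$ are never used as inputs (in that tree every left child of a $y$-query is a $0$-leaf), so they can be pruned, leaving only the $n$ negations on the address bits and giving $\EC(\ADDR_n)\le 2(\D(\ADDR_n)-1)+n=3n$; the extension to $\EADDR_{n,g}$ is then deferred to Proposition~\ref{prop-eaddr}, which glues two copies of this $\ADDR_n$ circuit (with $y$-inputs hardwired to constants according to $g$) through a $\lnot$/$\lor$ combiner. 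You instead construct a one-hot decoder tree, correctly identify that its single activated root-to-leaf path plus at-most-one-true-leaf $\lor$-trees keep the energy at $O(n)$, fold $g$ in directly via $G=\bigvee_{i:g(i)=1}d_i$, and finish with the XNOR gadget $(Y\wedge G)\vee(\overline{Y}\wedge\overline{G})$ — which matches Definition~\ref{defi:3} exactly. Your route is self-contained, handles arbitrary $g$ in one shot (thereby subsuming Proposition~\ref{prop-eaddr} rather than relying on it), avoids the two-copy doubling, and your diagnosis of why both the black-box use of Theorem~\ref{thm-upperbounds2} and the na\"ive pre-negation of the $y_i$'s fail (the latter by Lemma~\ref{lemm:5}) is precisely the trap the paper's pruning step is designed to escape; the paper's approach buys brevity by reusing machinery already built, whereas yours buys independence from it at the cost of re-verifying the one-hot invariant level by level. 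One cosmetic note: the lemma as printed says $\EADDR_{n,g}$ but its proof in the paper only bounds $\ADDR_n$ (the $\EADDR$ case living in Proposition~\ref{prop-eaddr}), so your proposal in fact proves the statement as literally written, which the paper's own proof of the lemma does not quite do.
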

	\begin{proof}
	    The lower bound can be deduced from Theorem \ref{theo:6}. It suffices to prove the upper bound by construction.
	    
	    Let $\mathcal T$ be the natural decision tree of $\ADDR_n$, where all the nodes in the $i$-th ($i\leq n$) level of $\mathcal T$ are labelled with $x_i$, and $y_j$'s are queried in $(n+1)$-th level. Thus, $\mathcal T$ is a full binary tree with depth $n+1$.
	    
	    Now consider the circuit $\mathcal{C}$ constructed in Theorem \ref{thm-upperbounds2} based on $\mathcal T$.
	    Note that the output of $g^\lnot_{y_i}$ are not received by any gate as input. Thus these redundant gates can be safely removed; and the remaining circuit $\mathcal{C}'$ still computes $\ADDR_{n}$.
	    Therefore,
	    $$
	    \EC(\ADDR_n) \leq \EC(\mathcal{C}') \leq 2\big(\D(\ADDR_n)-1\big)+\#\{\neg\text{-gates in }C'\}=3n.
	    $$ 
	\end{proof} 
	
	\begin{nonumtheorem*}{Proposition \ref{prop-eaddr} restated}
	   For any positive integer $n$ and an arbitrary Boolean function $g:\{0,1\}^n\to\{0,1\}$, $\EC(\EADDR_{n,g}) = \Theta(n)$.
	\end{nonumtheorem*}
	\begin{proof}
    	Apply the construction in Lemma \ref{lemm:addr}, and prepare two copies of the circuit computing $\ADDR_n$ and denote them as $\mathcal{C}_0$ and $\mathcal{C}_1$. Then modify them into a new circuit $\mathcal{C}'$ for $\EADDR_{n,g}$ as follows:
		\begin{enumerate}
		    \item For any $x\in\{0,1\}^n$, 
		        \begin{itemize}
		            \item if $g(x)=0$, change the input gate of $y_x$ in $\mathcal{C}_1$ into constant input gate $0$; 
		            \item if $g(x)=1$, change the input gate of $y_x$ in $\mathcal{C}_0$ into constant input gate $1$.
		        \end{itemize}
		    \item For any $i\in[n]$, merge the input gate of $x_i$ in $\mathcal{C}_0,\mathcal{C}_1$ together as the new input gate of $x_i$. 
		    \item Add a $\neg$-gate $\widetilde g$ linked from the output gate of $\mathcal{C}_0$. 
		    \item Add a $\lor$-gate $\widetilde h$ as the new output gate, which takes $\widetilde g$ and the output gate of $\mathcal{C}_1$ as input. 
		\end{enumerate}
		Thus $\mathcal{C}'$ has exactly $n+2^n$ input gates.
		To show $\mathcal{C}'$ actually computes $\EADDR_{n,g}$, it suffices to consider an arbitrary $x\in\{0,1\}^n$.
		If $g(x)=0$, sub-circuit $\mathcal{C}_1$ outputs $y_x$ which becomes $0$ after modification, and $\mathcal{C}_0$ still outputs $y_x$; thus after $\widetilde g$ and $\widetilde h$, $
		C'$ gives $\bar y_x$ correctly. Similar argument holds when $g(x)=1$.
		
		It is also easy to verify that $\EC(\mathcal{C}')$ is bounded by $\EC(\mathcal{C}_0)$ and $\EC(\mathcal{C}_1)$, more precisely,
		$$
		\begin{aligned}
		    \EC(\EADDR_{n,g})&\leq \EC(\mathcal{C}') \leq \max_{x}\big(\EC(\mathcal{C}_0, x) + \EC(\mathcal{C}_1, x)\big) + 2\\
		    &\leq \EC(\mathcal{C}_0)+\EC(\mathcal{C}_1)+2= O(n),
		\end{aligned}
		$$
		which completes the proof combining Theorem \ref{theo:6}.
	\end{proof}
	
\end{document}